\documentclass[journal,twocolumn,letterpaper]{IEEEtran}
\usepackage{ifpdf}
\usepackage{cite}

\usepackage{color}
\usepackage{amsthm}
\usepackage{amsmath,amssymb}
\usepackage{algorithmic}
\usepackage{array}
\usepackage{mdwmath}
\usepackage{mdwtab}
\usepackage{graphicx}
\usepackage{lipsum}
\usepackage{epstopdf }
\usepackage{algorithm,algorithmic}

\usepackage{enumerate}
\usepackage{eqparbox}
\usepackage{url}

\usepackage{subfig}
\usepackage{amsbsy}

\usepackage{mathtools}

\setlength{\skip\footins}{2mm}



\def\cJ{\mathcal{J}}

\def\cN{\mathcal{N}}

\newtheorem{theorem}{Theorem}
\newtheorem{proposition}{Proposition}

\begin{document}
\title{\huge An Online Optimization Framework for Distributed Fog Network Formation with Minimal Latency }
\author{\IEEEauthorblockN{Gilsoo~Lee$^{*}$,~Walid~Saad$^{*}$, and~Mehdi~Bennis$^{\dag}$}\\
\IEEEauthorblockA{ \small $^{*}$ Wireless@VT, Department of Electrical and Computer Engineering, Virginia Tech, Blacksburg, VA, USA, \\
Emails: \protect\url{{gilsoolee, walids}@vt.edu}. \\
\small $^\dag$ Centre for Wireless Communications, University of Oulu, Finland, 
 Email: \url{bennis@ee.oulu.fi}. \vspace{-7mm}\\
}
\thanks{A preliminary conference version \cite{lee2017online} of this work was presented at  IEEE ICC 2017.}
}

\maketitle

\begin{abstract} 
Fog computing is emerging as a promising paradigm to perform distributed, low-latency computation by jointly exploiting the radio and computing resources of end-user devices and  cloud servers. However, the dynamic and distributed formation of local fog networks is highly challenging due to the unpredictable arrival and departure of  neighboring fog nodes. Therefore, a given fog node must properly select a set of neighboring nodes and intelligently offload its computational tasks to this set of neighboring fog nodes and the cloud in order to achieve low-latency transmission and computation. In this paper, the  problem of fog network formation and  task distribution  is jointly investigated while considering a hybrid fog-cloud architecture. The overarching goal is to minimize the maximum communication and computation latency by enabling a given fog node to form a suitable fog network and optimize the task distribution,  under  uncertainty on the arrival process of neighboring fog nodes. To solve this problem, a novel online optimization framework is proposed in which  the neighboring nodes are selected by using a threshold-based online algorithm that uses a target competitive ratio, defined as the ratio between the latency of the online algorithm and the offline optimal latency. The proposed framework repeatedly updates its target competitive ratio and optimizes the distribution of the fog node's computational tasks in order to minimize latency. Simulation results show that, for specific settings, the proposed framework can successfully select a set of neighboring nodes while reducing  latency by up to $19.25$\% compared to a baseline approach based on the well-known online secretary framework. The results also show how, using the proposed framework, the computational tasks can be properly offloaded between the fog network and a remote cloud server in different network settings. 
\end{abstract}

\begin{IEEEkeywords}
Fog Network, Edge Computing, Online Optimization, Online Resource Scheduling, Network Formation. 
\end{IEEEkeywords}

\IEEEpeerreviewmaketitle

\vspace{-3mm}
\section{Introduction}
The Internet of Things (IoT) is expected to connect over 50 billion things worldwide, by 2020 \cite{dawy2017toward, mozaffari2016unmanned,  park2016learning}. To meet the low-latency requirement of task computation for the IoT devices, relying on conventional, remote cloud solutions may not be suitable due to the high end-to-end transmission latency of the cloud \cite{chiang2016fog}. Therefore, to reduce the transmission latency, the local proximity of IoT devices can be exploited for offloading computational tasks, in a distributed manner. Such local computational offload gives rise to the emerging paradigm of \emph{fog computing}~\cite{cisco2015fog}. Fog computing also known as edge computing allows overcoming the limitations of centralized cloud computation by enabling distributed, low-latency computation at the network edge, for supporting various wireless and IoT applications \cite{peng2016fog}. The advantages of the fog architecture comes from the transfer of some of the network functions to the network edge. Indeed, significant amounts of data can be stored, controlled, and computed over fog networks that can be configured and managed by end-user nodes \cite{chiang2016fog}. Within the fog paradigm, computational tasks can be intelligently allocated between the fog nodes and the cloud to meet computational and latency requirements \cite{elbamby2017proactive}. To implement  the fog paradigm, a three-layer network architecture is typically needed to manage sensor, fog, and cloud layers \cite{peng2016fog}. When the computing tasks are offloaded from the sensor layer to the fog and cloud layers, fog computing faces a number of challenges such as fog network formation and radio/computing resource allocation \cite{lee2017mmtc}. In particular, it is challenging for fog nodes to dynamically form and maintain a fog network that they can use for offloading their task. This challenge is exacerbated by the fact that fog computing devices are inherently mobile and will join/leave a network sporadically \cite{yannuzzi2014key}. Moreover, to efficiently use the computing resource pool of the fog network, novel resource management schemes for the hybrid fog-cloud network architecture are needed~\cite{bonomi2012fog}. 

To reap the benefits of fog networks, many architectural and operational challenges must be addressed~\cite{vallati2015exploiting, khelil2014suitability, luan2011towards, nishio2013service, sharma2017saca, zhao2017tasks, kaewpuang2013framework, khaledi2016profitable, ketyko2016multi, souza2016handling, park2016joint, yu2016joint, deng2015towards, mao2016power}. A number of approaches for fog network formation are investigated in \cite{vallati2015exploiting, khelil2014suitability, luan2011towards, nishio2013service, sharma2017saca}. To configure a fog network, the authors in \cite{vallati2015exploiting} propose the use of a device-to-device (D2D)-based network  that can efficiently support networking between a fog node and a group of sensors. Also, to enable connectivity for fog computing, the work in \cite{khelil2014suitability} reviews D2D techniques that can be used for reliable wireless communications among highly mobile nodes. The work in \cite{luan2011towards} proposes a framework for vehicular fog computing in which fog servers can form a distributed vehicular network for content distribution. In \cite{nishio2013service}, the authors study a message exchange procedure to form a local network for resource sharing between the neighboring fog nodes. The work in \cite{sharma2017saca} introduces a method to form a hybrid fog architecture in the context of transportation and drone-based networks. 

Once a fog network is formed, the next step is to share resources and tasks among fog nodes as studied in \cite{zhao2017tasks, kaewpuang2013framework, khaledi2016profitable, ketyko2016multi, souza2016handling, park2016joint, yu2016joint, deng2015towards, mao2016power}. For instance, the work in \cite{zhao2017tasks} investigates the problem of scheduling tasks over heterogeneous cloud servers in different scenarios in which multiple users can offload their tasks to the cloud and fog layers. The work in \cite{kaewpuang2013framework} studies the joint optimization of radio and computing resources using a game-theoretic approach in which mobile cloud service providers can decide  to cooperate in resource pooling. Meanwhile, in \cite{khaledi2016profitable}, the authors propose a task allocation approach that minimizes  the overall task completion time by  using a multidimensional auction and finding the best time interval between multiple auctions to reduce unnecessary time overheads. The authors in \cite{ketyko2016multi} study a latency minimization problem to allocate the computational resources of the mobile-edge servers. Moreover, the authors in \cite{souza2016handling} study the delay minimization problem in  fog and cloud-assisted networks under heterogeneous delay considerations. Moreover, the work in \cite{park2016joint} investigates the problem of minimizing the aggregate cloud fronthaul and wireless transmission latency. In \cite{yu2016joint}, a task scheduling algorithm is proposed to jointly optimize the radio and computing resources to reduce the users' energy consumption while satisfying delay constraints. The problem of optimizing power consumption is also considered in \cite{deng2015towards} subject to delay constraint using a queueing-theoretic delay model  at the cloud. Moreover, the work in \cite{mao2016power} studies the power consumption minimization problem in an online scenario subject to uncertain task arrivals. Furthermore, the work in \cite{lee2017computational}, studies how tasks can be predicted and proactively scheduled. Last, but not least, the work in \cite{wang2017enorm} implements a prototype for fog computing that can manage edge node's resources in a distributed computing environment.

In all of these existing fog network formation and task scheduling works in fog networks \cite{luan2011towards, nishio2013service, sharma2017saca, zhao2017tasks, kaewpuang2013framework, khaledi2016profitable, ketyko2016multi, souza2016handling, park2016joint, yu2016joint, deng2015towards}, it is generally assumed that information on the formation of the fog network is completely known to all nodes. However, in practice, the fog network can be spontaneously initiated by a fog node when other neighboring fog nodes start to dynamically join or leave the network. Hence, the presence of a neighboring fog node to which one can offload tasks is  unpredictable. Indeed, it is challenging for a fog node  to know when and where another fog node will arrive. Thus, there exists an inherent \emph{uncertainty} stemming from the unknown locations  and availability of fog nodes. Further, most of the existing works \cite{luan2011towards, nishio2013service, khaledi2016profitable, ketyko2016multi, park2016joint, souza2016handling, yu2016joint } typically assume a simple transmission or computational latency model for a fog node. In contrast, the use of a queueing-theoretic model for both transmission and computational latency is necessary to capture realistic latency metrics. Consequently, unlike the existing literature \cite{nishio2013service, khaledi2016profitable, ketyko2016multi, souza2016handling, park2016joint, yu2016joint } which assumes full information knowledge for fog network formation and relies on simple delay models, our goal is to design an \emph{online approach} to enable an on-the-fly formation of the fog network, under uncertainty, while minimizing the computational latency given an end-to-end latency model.

The main contribution of this paper is a novel framework for online fog network formation and task distribution in a hybrid fog-cloud network. This framework allows any given fog node to dynamically  construct a fog network by selecting the most suitable set of neighboring fog nodes in presence of uncertainty on the arrival order of neighboring fog nodes. The fog node can jointly use its fog network as well as a distant cloud server to compute given tasks. We formulate an online optimization problem whose objective is to minimize the maximum computational latency of all fog nodes  by properly selecting the set of fog nodes to which computations will be offloaded while also properly distributing the tasks among those fog nodes and the cloud. To solve this problem without any prior information on the future arrival order of fog nodes, we propose an online optimization framework that achieves a target competitive ratio; defined as the ratio between the latency achieved by the proposed algorithm and the optimal latency that can be achieved by an offline algorithm. In the proposed framework, an online algorithm is used to form a fog network when the neighboring nodes arrive sequentially, the task distribution is optimized among the nodes on the formed network, and the target competitive ratio is repeatedly updated. We show the target competitive ratio can be achieved by iteratively running the proposed algorithm. Simulation results show that the proposed  framework can achieve a target competitive ratio of $1.21$  in a given simulation scenario. For a specific simulation setting, simulation results show that the proposed algorithm can reduce the latency by up to $19.25$\% compared to the baseline approach that is a modified version of the popular online secretary algorithm \cite{lee2017online}. Therefore, the proposed  framework is shown to be able to find a suitable competitive ratio that can reduce the latency of fog computing while properly selecting the neighboring fog nodes that have high performance and suitably distributing tasks across fog nodes and a cloud server. 

The rest of this paper is organized as follows. In Section~\ref{sec:systemmodel}, the system model is presented. We formulate the online problem in Section~\ref{sec:problemformulation}. In Section~\ref{sec:solution}, we propose our online optimization framework to solve the problem. In Section~\ref{sec:simulations}, simulation results are carried out to evaluate the performance of our proposed framework. Conclusions are drawn in Section~\ref{sec:conclusion}.

\vspace{-3mm}
\section{System Model}\label{sec:systemmodel}

\begin{figure}[]
\centering
\includegraphics[width=0.3\textwidth]{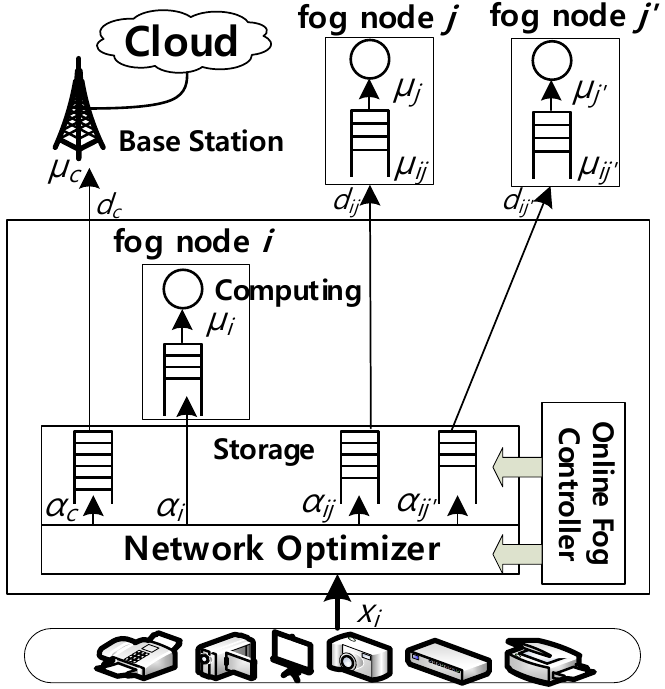}
\caption{\small System model of the fog networking architecture and the cloud.}
\label{fig:system}\vspace{-3mm}
\end{figure}

Consider a fog network  consisting of a sensor layer, a fog layer, and a cloud layer as shown in Fig.~\ref{fig:system}. In this system, the sensor layer includes smart and  small-sized IoT sensors with limited computational capability. Therefore, when sensors generate the computational tasks, the sensors' tasks are offloaded to the fog and cloud layers for purposes of remote distributed computing. Similarly, cloud tasks can also be offloaded to the fog layer. In our model, the cloud layer can be seen as the conventional cloud computing center. The fog layer refers to the set of IoT devices (also called fog nodes) that can perform fog computing jobs such as storing data and computing tasks. We assume that various kinds of sensors send their task data to a certain fog node $i$, and the data arrival rate to this node is $x_i$ packets per second where a task packet has a size of $K$ bits{\footnote{The initial fog node can gather data from any other node, including sensors or a cloud.}. Fog node $i$ performs the roles of collecting, storing, controlling, and processing the task data from the sensor layer, as is typical in practical fog networking scenarios~\cite{chiang2016fog}. In our architecture, for efficient computing, fog node $i$ must cooperate with other neighboring fog nodes and the cloud data center. We consider a network having a set $\mathcal{N}$ of $N$ fog nodes other than fog node $i$. For a given fog node $i$, we focus on the fog computing case in which fog node $i$ builds a network with a subset $\mathcal{J} \subset \mathcal{N}$  of $J$ neighboring fog nodes. Also, since the cloud is typically located at a remote location, fog node $i$ must access the cloud via wireless communication links using a cellular base station $c$.

Once the initial fog node $i$ receives tasks that arrive with  the rate of $x_i$ packets per second, it assigns a fraction of $x_i$ to other nodes. Then, each node within the considered fog-cloud network will locally compute the assigned fraction of $x_i$. The fraction of tasks locally computed by fog node $i$ is $\lambda_i(\alpha_i) = \alpha_i x_i$. Then, the task arrival rate offloaded from fog node $i$ to fog node $j\in \cJ$ is $\lambda_{ij}(\alpha_{ij}) = \alpha_{ij} x_i$. Therefore, the task arrival rate processed at the fog layer is $(\alpha_i + \sum_{j\in\cJ}\alpha_{ij}) x_i$. The number of remaining tasks $\lambda_c(\alpha_c) =\alpha_c x_i$ will then be offloaded to the cloud. When fog node~$i$ makes a decision on the distribution of all input tasks $x_i$, the task distribution variables are represented as vector ${\boldsymbol{\alpha}}=[\alpha_i, \alpha_c, \alpha_{i1}, \ldots, \alpha_{ij}, \ldots, \alpha_{iJ}]$ with $\sum_{j\in \cJ} \alpha_{ij} + \alpha_i + \alpha_c = 1$. Naturally, the total task arrival rate that arrives at fog node $i$ will be equal to the sum of the task arrival rates assigned to all computation nodes in the fog and cloud layers. Also, to model the random arrival of tasks from the sensors to fog node $i$, the total task arrival rate arriving at fog node $i$ can be modeled by a Poisson process~\cite{deng2015towards}. The tasks offloaded to the fog nodes and the cloud also follow a Poisson process if the tasks are randomly scheduled in a round robin fashion \cite{bertsekas1992data}. Also, the initial fog node can determine the transmission order of the task packets offloaded from the sensor layer. Therefore, in future work, if the tasks offloaded from the sensor layer have different service-level latency requirements, the initial fog node can prioritize urgent task packets in its queue. 

\begin{table}[]
	\centering
	\caption{Summary of notations}
	\label{table:variables}
	\begin{tabular}{|l|l|}
		\hline
		$i$ & Index of initial fog node                                   \\ \hline $j$ & Index of neighboring fog nodes in $\cJ$          \\ \hline
		$c$ & Index of cloud                                                \\ \hline $J=|\cJ|$ & Number of neighboring fog nodes          \\ \hline
		$x_i$ & {\color{black} Total task arrival rate from sensors to node $i$  }    \\ \hline $\alpha_{k\in \in \{i,ij,c\}}$ & Tasks offloaded toward $k$\\ \hline
		$\mu_{ij}$ & Fog transmission service rate from $i$ to $j$  \\ \hline $\mu_c$ & Cloud transmission service rate               \\ \hline
		$\mu_i$     & Computing service rate of fog node $i$  \\ \hline $\mu_j$ & Computing service rate of fog node $j$    \\ \hline
		$1/\omega_{k \in \{i,j,c\}}$&Processing speed of node $k$ \\ \hline  $n$ & Arrival order  \\ \hline
		$K$& Size of a task packet  \\ \hline $\gamma$ & Target competitive ratio \\ \hline
	\end{tabular}\vspace{-3mm}
\end{table}

When the tasks arrive from the sensors to fog node $i$, they are first saved in fog node $i$'s storage, incurring  a waiting delay before they are transmitted and distributed to other nodes (fog or cloud). This additional delay pertains to  the transmission from fog node $i$ to $c$ or $j$ and can be modeled using a \emph{transmission queue}. Moreover, when the tasks arrive at the destination, the latency required to perform the actual computations  will be captured by a \emph{computation queue}. In Fig.~\ref{fig:system}, we show examples of both queue types. For instance, for transmission queues, fog node~$i$  must maintain transmission queues for each fog node $j$ and the cloud $c$. For computation, each fog node has a computation queue. To model the transmission queue, the tasks are transmitted to fog node $j$ over a wireless channel. Then, the service rate (in packets per second) can be given by 
\begin{eqnarray}\label{eq:mu}
\mu_{ij} =\frac{W_l}{K}  \log_2\left(1+\frac{g_{ij} h P_{\textrm{tx},i}}{W_l N_0}\right),
\end{eqnarray}
where $g_{ij}$ is the channel gain between fog nodes $i$ and $j$ with $d_{ij}$ being the distance between them, and $h$ is the average fading gain of the fog node $i$. When the fog nodes are located in proximity within a similar environment, we assume that they have identical average fading gains. If $d_{ij} \leq 1~\textrm{m}$, $g_{ij} \triangleq \beta_1$, and, if $d_{ij} > 1~\textrm{m}$, $ g_{ij} \triangleq \beta_1 d_{ij}^{-\beta_2}$ where $\beta_1$ and $\beta_2$ are, respectively, the path loss constant and the path loss exponent. Also, $P_{\textrm{tx},i}$ is the transmission power of fog node $i$ and $N_0$ is the noise power spectral density. The bandwidth per node is given by $W_l$ where $l=1$ and $2$ indicate, respectively, two types of bandwidth allocation schemes: equal allocation and cloud-centric allocation.\footnote{The problem of joint bandwidth optimization and fog computing can be subject for future work.} For \emph{equal bandwidth allocation}, all nodes in the network will be assigned equal bandwidth, i.e., $W_1=\frac{B}{J+1}$ where the total bandwidth $B$ is equally shared by $J+1$ nodes that include $J$ neighboring fog nodes and the connection to the cloud via the base station. For the \emph{cloud-centric bandwidth allocation}, the bandwidth allocated to the cloud is twice that of the bandwidth used by a fog node, i.e., the cloud and the fog node will be assigned the bandwidth $\frac{2B}{J+2}$ and $\frac{B}{J+2}$, respectively. 

Since the tasks arrive according to a Poisson process, and the transmission time in \eqref{eq:mu} is deterministic, the latency of the transmission queue can be modeled as an M/D/1 system\footnote{Instead of M/D/1 queueing, other delay  models can be used to account for other characteristics, such as different packet size or finite buffer size.} \cite{bertsekas1992data}:
\begin{eqnarray}\label{Tj}
T_j(\lambda_{ij}(\alpha_{ij}), \mu_{ij}) = \frac{\lambda_{ij}(\alpha_{ij})}{2\mu_{ij}(\mu_{ij}-\lambda_{ij}(\alpha_{ij}))} + \frac{1}{\mu_{ij}},
\end{eqnarray}
where the first term is the waiting time in the queue at fog node $i$, and the second term is the transmission delay between fog nodes $i$ and $j$. Similarly, when the tasks are offloaded to the cloud, the transmission queue delay will be: 
\begin{eqnarray}\label{Tc}
T_c(\lambda_c(\alpha_c), \mu_c) =  \frac{\lambda_{c}(\alpha_c)}{2\mu_{c}(\mu_{c}-\lambda_{c}(\alpha_c))} + \frac{1}{\mu_c},
\end{eqnarray}
where the service rate $\mu_c$ between fog node $i$ and cloud $c$ is given by \eqref{eq:mu} where fog node $j$ is replaced with cloud~$c$. 

Next, we define the computation queue. When a fog node needs to compute a task, this task will experience a waiting time in the computation queue of this fog node due to a previous task that is currently being processed. Since a fog node $j$ receives tasks from not only fog node $i$ but also other fog nodes and sensors, the task arrival process can be approximated by a Poisson process by applying the Kleinrock approximation \cite{bertsekas1992data}. Therefore, the computation queue can be modeled as an M/D/1 queue and the  latency of fog node $j$'s computation will be: 
\begin{eqnarray}\label{Sj}
	S_j(\lambda_{ij}(\alpha_{ij})) =  \frac{\lambda_{ij}(\alpha_{ij})}{2\mu_{j}(\mu_{j}-\lambda_{ij}(\alpha_{ij}))} + \frac{1}{\mu_{j}} + \omega_j \lambda_{ij}(\alpha_{ij}), 
\end{eqnarray} 
where the first term is the waiting delay in the computation queue, the second term is the delay for fetching the proper application  needed to compute the task, and the third term is a function of the processor delay implying the processing delay for the task. The delay of this fetching procedure depends on the performance of the node's hardware which is a deterministic constant that determines the service time of the computation queue. In the first and second terms of \eqref{Sj}, $\mu_j$ is a parameter related to the overall hardware performance of fog node $j$. In the third term, $\omega_j \lambda_{ij}(\alpha_{ij})$ is the actual computation time of the task with $\omega_j$ being a constant time needed to compute a task. For example, $1/\omega_j$ can be  proportional to the CPU clock frequency of fog node $j$. $\omega_j \lambda_{ij}(\alpha_{ij})$ implies that the delay needed to compute a task at a given node can increase with the task arrival rate since the number of concurrently running tasks increases with the task arrival rate. The increased number of the concurrently running tasks also increases the context switching delay that affects the computing delay. For fog node $j\in\cJ$, it is assumed that the maximum of computing service rate and processing speed are given by $\bar\mu_j$ and $1/\underline\omega_j$, respectively. This information can be known in advance if the manufacturers of fog devices can provide the hardware performance in the database. Then, when fog node $i$ locally computes its assigned tasks $\lambda_i(\alpha_i)$, the latency will be: 
\begin{eqnarray}\label{Si}
S_i(\lambda_i(\alpha_i)) =  \frac{\lambda_{i}(\alpha_i)}{2\mu_{i}(\mu_{i}-\lambda_{i}(\alpha_i))} + \frac{1}{\mu_{i}} + \omega_i  \lambda_i (\alpha_i),
\end{eqnarray}
where $\mu_i$ is the computing service rate of fog node $i$ (dependent on hardware performance) and $\omega_i \lambda_i(\alpha_{i})$ is the fog node $i$'s computing time. Since the cloud is equipped with more powerful and faster hardware than the fog node, the waiting time at the computation queue of the cloud can be ignored. This implies that the cloud initiates the computation for the received tasks without queueing delay; thus, we only account for the actual computing delay. As a result, when tasks are computed at the cloud, the computing delay at the cloud will be: 
\begin{eqnarray}\label{Sc}
S_c(\lambda_c(\alpha_c)) = \omega_c \lambda_c(\alpha_c).
\end{eqnarray}

In essence, if a task is routed to the cloud $c$, the latency will be 
\begin{eqnarray}\label{Dc}
 D_c(\lambda_c(\alpha_c), \mu_c) =  T_c(\lambda_c(\alpha_c), \mu_c) + S_c(\lambda_c(\alpha_c)).
\end{eqnarray}
Also, if a task is offloaded to fog node $j$, then the latency can be defined as the sum of the transmission and computation queueing delays: 
\begin{eqnarray}\label{Dj}
D_j(\lambda_{ij}(\alpha_{ij}), \mu_{ij}) = T_j(\lambda_{ij}(\alpha_{ij}), \mu_{ij}) + S_j(\lambda_{ij}(\alpha_{ij})).
\end{eqnarray}
Furthermore, when fog node $i$ computes tasks locally, the latency will be:\vspace{-4mm}
\begin{eqnarray}
D_i(\lambda_i(\alpha_i)) = S_i(\lambda_i(\alpha_i)),
\end{eqnarray}
since no transmission queue is necessary for local computing. Since $x_i$ is constant, $\lambda_{k\in\{i,{ij},c\}}$ is only dependent to $\alpha_k$. From now on, for notational simplicity, $\lambda_k(\alpha_k)$ is presented by $\lambda_k$. Given this model, in the next section, we formulate an online latency minimization problem to study how a fog network  can be formed and how tasks are effectively distributed in the fog network. 

\vspace{-3mm}
\section{Problem Formulation}\label{sec:problemformulation}

In distributed fog computing, the maximum latency of computing nodes must be minimized for effective distributed computing. To minimize the maximum latency, fog node $i$ must opportunistically find neighboring nodes to form a fog network and carry out the process of task offload. In practice, such neighbors will dynamically join and leave the system. Also, the neighbors have to process their existing workloads \cite{varghese2016challenges}. As a result, the initial fog node $i$ will be unable to know a priori whether an adjacent fog node will be available to assist it with its computation by sharing the communication and computational resources. Moreover, the total number of neighboring fog nodes as well as their locations and their available computing resources are unknown and highly unpredictable. Under such uncertainty, jointly optimizing the fog network formation and task distribution processes is challenging since selecting neighboring fog nodes must account for potential arrival of new fog nodes that can potentially provide a higher data rate and stronger computational capabilities. To cope with the uncertainty of the neighboring fog node arrivals while considering the data rate and computing capability of current and future fog nodes, we introduce an \emph{online optimization scheme} that can handle the problem of fog network formation and task distribution under uncertainty.

We formulate the following online fog network formation and task distribution problem whose goal is to minimize the maximum latency when computing a new task that arrives at fog node~$i$:
\begin{eqnarray}
\label{problem1}
\hspace{-3mm}\min_{ \cJ_{\boldsymbol\sigma}, \boldsymbol{\alpha}}   &&\hspace{-4mm}
                        \max \left(D_i(\lambda_i), \; D_c(\lambda_c, \mu_c), \; D_{j \in \cJ_{\boldsymbol\sigma}}(\lambda_{ij}, \mu_{ij})\right),\\
\hspace{-3mm}\textrm{s.t.} &&\hfill \nonumber\\
\hspace{-3mm}									&&\hspace{-10mm} \alpha_i +\alpha_c  +\textstyle \sum_{j\in\cJ}  \alpha_{ij} = 1, \label{problem1:c1}\\
\hspace{-3mm}                   &&\hspace{-10mm} \alpha_i  \in [0,1], \alpha_c \in [0,1], \alpha_{ij}  \in [0,1], \forall j \in \cJ_{\boldsymbol\sigma} \subset \cN_{\boldsymbol\sigma}, \label{problem1:c2}\\
\hspace{-3mm}                   &&\hspace{-10mm} \alpha_i x_i \!\leq\! \mu_i, \! \alpha_c x_i \!\leq\! \mu_c, \! \alpha_{ij} x_i \!\leq\! \mu_{j}, \! \alpha_{ij} x_i \!\leq\! \mu_{ij},  \! \forall j\!\in\!\cJ_{\boldsymbol\sigma},\label{problem1:c4}\\
\hspace{-3mm}                   &&\hspace{-10mm}  |\cN_{\sigma}| \leq N. \label{problem1:c3} 
\end{eqnarray}
Since our goal is to minimize the worst-case latency among the fog nodes and the cloud, any task can be processed with a low latency regardless of which node actually computes the task\footnote{If the objective function is defined with a minimum function, the initial fog node will minimize the latency of only one node, and, therefore, it will increase the latency of other nodes.}. By using an auxiliary variable $u$, problem \eqref{problem1} can be transformed into the following: 
\begin{eqnarray}\label{problem2}
\hspace{-3mm}\min_{\color{black}  \cJ_{\boldsymbol\sigma}, \boldsymbol{\alpha}}   && \hspace{-5mm}  u,\\
\hspace{-3mm}\textrm{s.t.} && \hspace{-5mm} u \geq \max \left(D_i(\lambda_i),  D_c(\lambda_c, \mu_c),  D_{j \in \cJ_{\boldsymbol\sigma}}(\lambda_{ij}, \mu_{ij})\right), \label{problem2:c1}\\
\hspace{-3mm}                   && \hspace{-5mm} \eqref{problem1:c1}, \eqref{problem1:c2}, \eqref{problem1:c4}, \eqref{problem1:c3},\nonumber 
\end{eqnarray}
where $u$ is the maximum latency of the fog network. In \eqref{problem2}, $u$ represents the largest value among $D_i(\lambda_i), D_c(\lambda_c, \mu_c)$, and $D_{j}(\lambda_{ij}, \mu_{ij})$. Then, minimizing $u$ is equivalent to minimizing the $\max$ function in \eqref{problem1}. Hence, problems \eqref{problem1} and \eqref{problem2} are equivalent. 

In constraints~\eqref{problem1:c1}~and~\eqref{problem1:c2}, all tasks arriving at fog node $i$ are offloaded among the computing nodes in the fog network. Due to constraint~\eqref{problem1:c4}, the tasks offloaded to a node cannot exceed the service rate of the computing node. In this problem, the initial fog node $i$ determines the set of neighboring fog nodes $\cJ_{\boldsymbol\sigma}$ when they arrive online and the task distribution vector $\boldsymbol\alpha$ so as to minimize the computing latency. Fog node $i$ will observe a total number of $N$ arriving fog nodes due to constraint \eqref{problem1:c3}. Fog node $i$ has to make a decision on network formation and task distribution while observing $N$ neighboring nodes. As the number of observations increases, fog node $i$ may be able to discover neighboring fog nodes that have higher performance. However, due to constraint \eqref{problem1:c3}, fog node $i$ cannot wait to observe an infinite number of neighboring fog nodes. Thus, while observing up to $N$ arriving fog nodes, fog node $i$ should select $J\leq N$ neighboring fog nodes to minimize \eqref{problem1}. 

In our model, we assume that fog node~$i$ does not have any prior information on the neighboring fog nodes given by set $\cN_{\boldsymbol\sigma}$, and the information about each neighboring node is collected sequentially. Such random arrival sequence is denoted by $\boldsymbol\sigma = \sigma_1, \ldots, \sigma_n, \ldots, \sigma_N$ where the arrival of $n$-th neighboring node is shown as $\sigma_n$. For example, a smartphone can choose to become a fog node spontaneously if it decides to share its resources. In practice, to discover the neighboring nodes, the fog nodes can use the node discovery mechanisms implemented in D2D networks \cite{vallati2015exploiting}. When fog node~$i$ does not have complete information on other fog nodes, the nodes in  $\cN_{\boldsymbol\sigma}$ arrive at fog node $i$ in a random order, and index $n$ can be the arriving order of the neighboring fog nodes. At the arrival of a neighboring node, the arrival order $n$ increases by one; thus, $n$ captures the time order of arrival. At time $n$, node $n$ can transmit a beacon signal to fog node $i$ to indicate its willingness to join the network of fog node $i$. The beacon signal can include an information tuple on node $n$ that includes the distance $d_{in}$, computing service rate $\mu_n$, and the processing speed $\omega_n$. At each time that $\sigma_n$ is known, e.g., by receiving the beacon signal, fog node $i$ will now have information on these parameters that pertain to node $n$ \cite{doppler2011advances}. Therefore, fog node $i$ only knows the information on the nodes that have previously arrived (as well as the current node). 

When fog node $i$ observes $\sigma_n$ and has knowledge of the  $n$-th neighboring node, it has to make an online decision whether to select node $n$. If fog node $n$ is chosen by the initial fog node $i$, it is indexed by $j$ and included in a set $\cJ_{\boldsymbol\sigma}$ which is a subset of $\cN_{\boldsymbol\sigma}$. Otherwise, fog node $i$ will no longer be able to select fog node $n$ at a later time period since the latter can join another fog network or terminate its resource sharing offer to fog node~$i$. For notational simplicity, $\cJ_{\boldsymbol\sigma}$ and $\cN_{\boldsymbol\sigma}$ are hereafter denoted as $\cJ$ and $\cN$, respectively. Fog node $i$ will not be able to have complete information about all $N$ neighboring nodes before all neighboring nodes are selected by fog node $i$. Therefore, since fog node $i$ cannot know any information on future fog nodes, it is challenging for the initial fog node $i$ to form the fog network by determining~$\cJ$. 

Even when the information on each node is known to fog node $i$, it is difficult to calculate the exact service rates of the fog node in the formulated problem. This is due to the fact that the service rate in \eqref{eq:mu}, that includes the wireless data rate, is a function of the network size $J$. As the number of nodes sharing their wireless bandwidth increases, the available channel bandwidth per node decreases, thus reducing the data rate. Therefore, unlike the constant parameters  $\mu_i$ and $\mu_j$, the transmission service rates $\mu_{ij}$ and $\mu_c$ will vary with the network size. As a consequence, in order to calculate the service rates of neighboring nodes, fog node $i$ has to determine the network size. However, the optimal network size can change by the selection of neighboring nodes. Since network size and node selection are related, it is challenging for fog node $i$ to optimize both network size and the set of neighboring nodes that minimize \eqref{problem2}. To solve the online problem, we need to find the set of neighboring fog nodes $\cJ$ and the task distribution vector $\boldsymbol{\alpha}$ that minimize the maximum latency. Moreover, since there is uncertainty about the future arrival of neighboring nodes as well as their service rates, one has to seek an online, sub-optimal solution that is also robust to uncertainty. In the next section, we propose an online optimization framework that  minimizes the value of $u$ in \eqref{problem2}. 

\vspace{-3mm}
\section{Task Distribution and Network Formation Algorithms}\label{sec:solution}

In our problem, fog node $i$ has to decide whether to admit each neighboring node as the different neighboring nodes arrive in a random order. This problem can be formulated as an online stopping problem. In such problems, such as online secretary problem \cite{babaioff2007knapsack}, the goal is to develop online algorithms that enable a company to hire a number of employees, without knowing in which order they will arrive to the interview. To apply such known solutions from the stopping problems, the following assumptions are commonly needed. For instance, the number of hiring positions should be deterministic and given in the problem. Also, the decision maker should be able to decide the preference order among the candidates by comparing the values that can be earned by hiring candidates. Under these assumptions, online stopping algorithms can be used to select the best set of candidates in an online manner. In this regard, even though the structures of our fog network formation problem and the secretary problem are similar, the fog network formation case has different assumptions. First, the number of neighboring fog nodes is an optimization variable in our problem. Second, the latency of computing nodes that somewhat maps to the valuation of hiring candidates in the secretary problem is not constant. Moreover, in our problem, each neighboring fog node exhibits two types of latency: transmission latency and computing latency. As a result, it is challenging to define the preference order of the neighboring nodes as done in conventional online stopping problems. To address those challenges, we propose a new online optimization framework\footnote{The framework proposed in this work is different from the previous work in \cite{lee2017online} since this work uses a different definition of transmission service rate in \eqref{eq:mu} and a different objective function in \eqref{problem1}.} that extends existing results from online stopping theory to accommodate the specific challenges of the fog network formation problem\footnote{Fog networks can be formed by using game-theoretic approaches such as coalitional games which require a complete knowledge of the exact utility functions \cite{saad2009distributed}. However, such  knowledge can be difficult to gather, since the initial fog node cannot have the complete information on the neighboring nodes in an online scenario, and, therefore, an online optimization framework is more apropos. Moreover, using a coalitional game framework to solve the proposed fog network formation problem under uncertainty will require the use of very complex algorithms that are not amenable to analysis, unlike the proposed online optimization framework.}.

\vspace{-3mm}
\subsection{Overview of the Proposed Optimization Framework}

\begin{figure}[]
\centering
\includegraphics[width=0.35\textwidth]{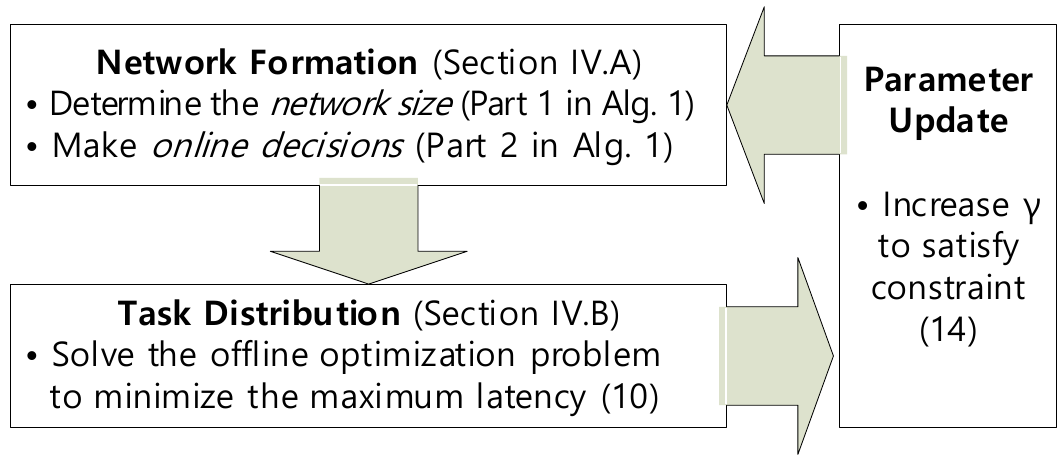}
\caption{Online optimization framework for Fog network formation and task distribution.}
\label{fig:framework}\vspace{-3mm}
\end{figure}

Problem~\eqref{problem2} has two optimization variables $\cJ$ and $\boldsymbol{\alpha}$ that constitute the solutions of the network formation and task distribution problems, respectively. To solve \eqref{problem2}, fog node $i$ must first optimize the network formation by selecting the neighboring fog nodes, and then decide on its task distribution. This two-step process is required due to the fact that the computing resources of the fog nodes are unknown before the network is formed. The  online optimization framework consists of three highly inter-related components as shown in Fig.~\ref{fig:framework}. In the \emph{network formation stage}, an online algorithm is used to find $\cJ$ by determining the minimal network size and, then, selecting the neighboring fog nodes within $N$ observations to satisfy \eqref{problem1:c3}. After $\cJ$ is determined, the task distribution among the selected nodes is optimized by using an offline optimization method during the \emph{task distribution stage}. The output of the task distribution stage is the task allocation vector $\boldsymbol\alpha$ that satisfies constraints \eqref{problem1:c1}, \eqref{problem1:c2}, and \eqref{problem1:c4}. Finally, we use a  \emph{parameter update stage}, during which the target performance parameter $\gamma$ that will be used in the next iteration is updated in order to satisfy constraint~\eqref{problem1:c3}. After repeatedly running three components of our framework, fog node $i$ is able to form a network without any prior information on the neighboring nodes and also offload the tasks to the nodes on the fog network. This algorithm is shown to converge in Theorem~\ref{theorem3}. 

The performance of our online optimization framework will be evaluated by using \emph{competitive analysis} \cite{borodin2005online}. In this analysis, the performance is measured by the \emph{competitive ratio} $\gamma$ that is defined by 
\begin{eqnarray}\label{cr}
1\leq \frac{\textrm{ALG}(\boldsymbol\sigma)}{\textrm{OPT}(\boldsymbol\sigma)} \leq \gamma,
\end{eqnarray}
where $\textrm{ALG}(\boldsymbol\sigma)$ denotes the latency achieved by the online algorithm and $\textrm{OPT}(\boldsymbol\sigma)$ is the optimal latency achieved by an offline algorithm. If the online algorithm finds the optimal solution, the online algorithm achieves $\gamma=1$. However, since the online algorithm cannot have complete information, it is challenging to find the optimal solution in an online setting. Therefore, in an online minimization problem, the online algorithm should be able to achieve $\gamma$ that is close to one. We use this notion of competitive ratio to design our online optimization~framework.

\begin{figure}[]
\centering
\includegraphics[width=0.30\textwidth]{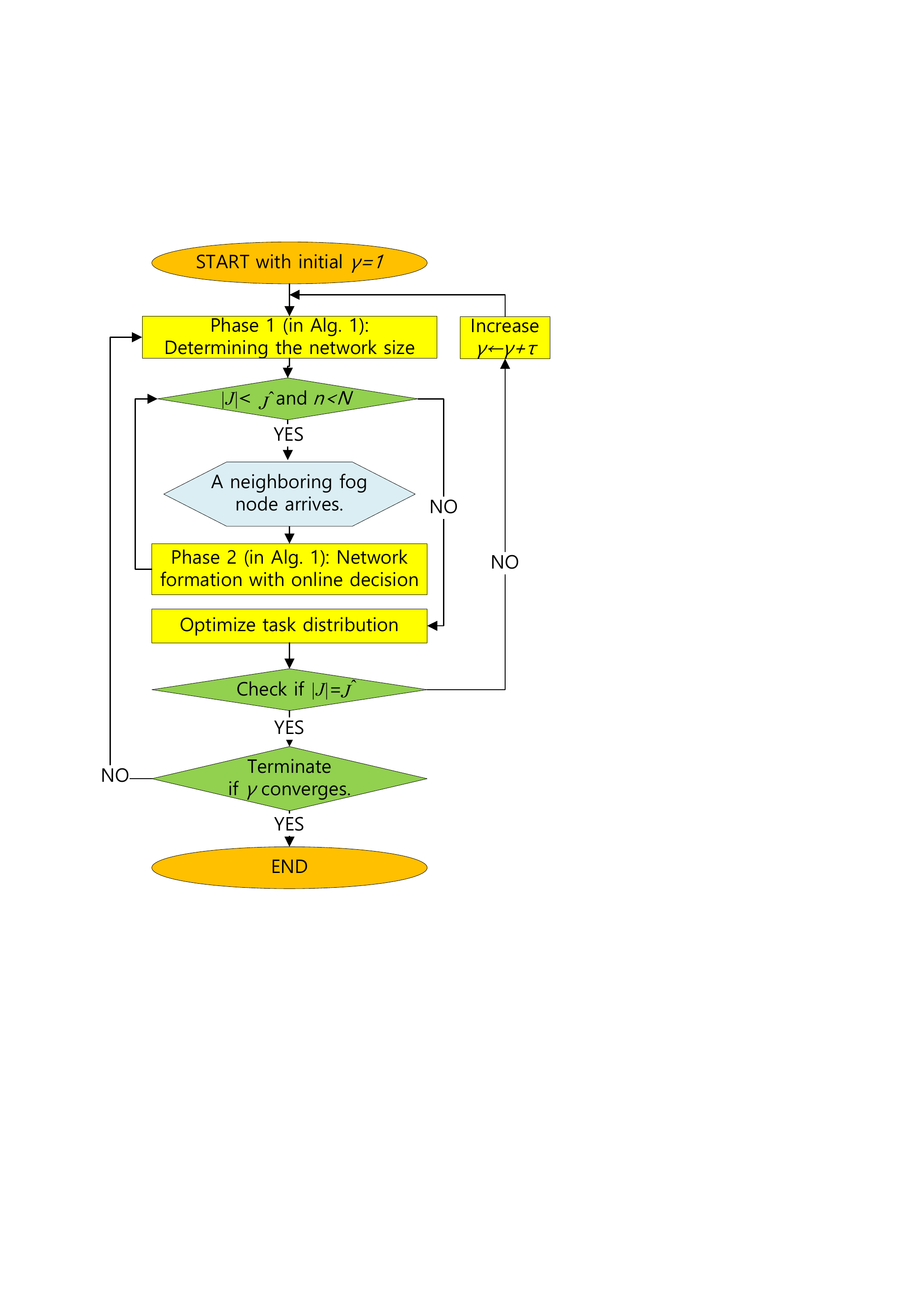}
\caption{ Flow chart of the proposed  framework for fog network formation and task distribution.}
\label{fig:flowchart}\vspace{-3mm}
\end{figure}

The  online optimization framework is summarized in the flow chart shown in Fig.~\ref{fig:flowchart}. In the network formation stage, fog node $i$ needs to select the set of neighboring fog nodes with high service rates and processing speeds  to achieve a given value of $\gamma$. At each iteration, to achieve a target competitive ratio $\gamma$, fog node $i$ determines the number of neighboring nodes $\hat{J}$ by using Phase 1 of Algorithm~\ref{algorithm}, and it sequentially observes the arrivals of a total of $N$ neighboring fog nodes while making an online decision in Phase 2 of Algorithm~\ref{algorithm}. After the network formation stage is finished, the task distribution is optimized by the initial fog node in an offline manner. Then, fog node $i$ checks whether the number of selected neighboring nodes is $\hat{J}$. For a small value of $\gamma$, fog node $i$ must find the neighboring nodes having a high computing service rate and processing speed so as to achieve low latency. Therefore, in this case, fog node~$i$  must observe a large number of neighboring nodes until $\hat{J}$ neighboring nodes are selected. Hence, $N$ observations may not be sufficient to find $\hat{J}$ neighboring nodes. On the other hand, a large $\gamma$ can allow the target latency to be less stringent, thus allowing the fog node $i$ to  select the neighboring nodes with fewer observations. To find the proper value of $\gamma$, the proposed  framework iteratively updates $\gamma$. For instance, the value of $\gamma$ can be set to one initially. Then, if a smaller~$\gamma$ cannot be achieved in the network formation stage at that iteration, the value of~$\gamma$ increases by a small constant~$\tau$. By repeatedly increasing $\gamma$, the proposed  framework can find the achievable value of $\gamma$. In the next section, we present the details of the proposed online algorithm that exploits the updated value of $\gamma$ for the network formation stage. 

\vspace{-3mm}
\subsection{Fog Network Formation: Online Approach}\label{sec:algorithm}

\begin{algorithm}[t]\smallskip\smallskip
\caption{Online Fog Network Formation Algorithm} 
\begin{algorithmic}[1]\label{algorithm}\vspace{-2mm}\small	
\item[1 :] \hspace{0.0cm}  {\bf inputs}: {\color{black}$N$}, $\gamma$, $\mu_i$, $\omega_i$, $\omega_c$, $d_c$, $\bar\mu_{ij}(\underline d_{ij})$, $\bar\mu_j$, $\underline\omega_j$. 
\item[   ] \hspace{0.0cm}  \emph{Phase 1: Calculate $\hat{\lambda}_{ij}$, $\hat{J}$, and $\hat{u}$.}
\item[2 :] \hspace{0.3cm}  {\bf initialize}: $J=0$, {\color{black}$n=0$}.
\item[3 :] \hspace{0.3cm}  {\bf while} $\Delta \geq 0$
\item[4 :] \hspace{0.6cm}    $J\leftarrow J+1$.
\item[5 :] \hspace{0.6cm}  $\Delta \leftarrow \left[D_j(\lambda_{ij}, \bar\mu_{ij})\right]_{|\cJ|=J-1} - \left[D_j(\lambda_{ij}, \bar\mu_{ij})\right]_{|\cJ|=J}$. 
\item[6 :] \hspace{0.3cm}  {\bf end while}
\item[7 :] \hspace{0.3cm}  Find $\hat{\lambda}_{ij}$ by optimizing task distribution when $|\cJ|=J-1$.
\item[8 :] \hspace{0.3cm}  Set $\hat{J}=J-1$ and $\hat{u}= \left[ D_j(\hat{\lambda}_{ij}, \bar\mu_{ij}) \right]_{|\cJ|=J-1}$.
\item[   ] \hspace{0.0cm}  {\emph{Phase 2: Decide $\cJ$.}} 
\item[9 :] \hspace{0.3cm}  {\bf while} {\color{black} $|\cJ| < \hat{J}$ and $n < N$ }
\item[10:] \hspace{0.6cm}    {\bf if} $D_n(\hat{\lambda}_{ij}, \mu_{in}) \leq \gamma  \hat{u}$,
\item[11:]\hspace{0.9cm}       $\cJ  \leftarrow  \cJ \cup \{n\}$.
\item[12:]\hspace{0.6cm}     {\bf end if}
\item[13:]\hspace{0.6cm}     {\color{black}$n \leftarrow n+1$.}
\item[14:]\hspace{0.3cm}   {\bf end while}\vspace{-2mm}
\end{algorithmic}\smallskip\smallskip
\end{algorithm}

In problem~\eqref{problem2}, the decision on $\cJ$ faces two primary challenges: how many fog nodes are needed in the network and which fog nodes join the network (at which time). Since the transmission service rates are functions of the wireless bandwidth that can vary with the network size, the service rates of neighboring fog nodes cannot be calculated without having a fixed network size. Therefore, the proposed  algorithm includes two phases as shown in Algorithm~\ref{algorithm}. The goal of the first phase is to determine the parameters including the network size and the temporal task distribution so that the parameters can be used in the second phase of Algorithm~\ref{algorithm}. Then, the second phase of Algorithm~\ref{algorithm} allows fog node $i$ to make an online decision regarding the selection of an  arriving node. 

In the first phase of Algorithm~\ref{algorithm}, the goal is to determine the parameters that will be used in the second phase of Algorithm~\ref{algorithm}. In the given system model, a neighboring node will be referred to as \emph{ideal} in terms of minimizing the latency in \eqref{problem2} if it has the highest computing service rate $\bar\mu_j$, processing speed $1/\underline\omega_j$, and transmission service rate $\bar\mu_{ij}$ when the distance between two fog nodes is $\underline d_{ij}$. Such an \emph{ideal node} is denoted by $\bar{j}$. If a network is formed with  nodes having high computing resources, a smaller network size can effectively minimize the latency. When the service rates of the nodes are divided by the smallest network size, the transmission service rates of the nodes also can be maximized, and, hence, the latency can be minimized. In the case in which the ideal nodes construct a network, the minimized latency of \eqref{problem2} is denoted by $\hat{u}$. Also, when the latency is $\hat{u}$, the corresponding number of neighboring nodes and task distribution are denoted by $\hat{J}$ and $\{\hat{\lambda}_{i}, \hat{\lambda}_{c}, \hat{\lambda}_{ij}\}$, respectively. 

{\bf{First phase:}} The first phase of Algorithm~\ref{algorithm} is used to calculate $\hat{J}$ and $\hat{\lambda}_{ij}$. The latency in \eqref{problem2} decreases as the number of neighboring nodes increases since the computational load per node can be reduced. However, if the number of neighboring nodes becomes too large, the bandwidth per fog node will be smaller yielding  lower  transmission service rates for the nodes. Consequently, the latency can increase with the number of neighboring nodes, due to these bandwidth limitations. By using the relationship between network size and latency, the first phase of Algorithm~\ref{algorithm} searches for $\hat{J}$ while increasing the network size incrementally, one by one. Once the number of neighboring users~$\hat{J}$ that minimizes $\hat{u}$ is found, the tasks offloaded to each ideal node are denoted by  $\hat\lambda_{ij}$. Therefore, we will have $\hat{J}$, $\hat{u}$, and $\hat\lambda_{ij}$ as the outputs from the first phase of Algorithm~\ref{algorithm} that will be used in the second phase of Algorithm~\ref{algorithm}. 

{\bf{Second phase:}} In the second phase of Algorithm~\ref{algorithm}, fog node $i$ decides on whether to select each neighboring node or not, by using a threshold-based algorithm. Our algorithm uses a single threshold so that the latency of each arriving node can be compared with the threshold value. Since comparing two values is a simple operation having constant time complexity, a threshold-based algorithm can be executed with low latency. However, before the network formation process is completed, fog node $i$ is not able to know the optimal latency of each node, and, therefore, finding the distribution of tasks that must be offloaded to each node is not possible. Nonetheless, fog node $i$ must set a threshold before the first neighbor arrives. To this end, fog node~$i$ sets this initial  threshold by assuming that an equal amount of tasks, $\hat{\lambda}_{ij}$, is offloaded to each one of the $\hat{J}$ neighboring nodes. Thus, in our threshold-based algorithm, the threshold value is compared with the latency that results from offloading $\hat{\lambda}_{ij}$ tasks. For example, when a neighboring node $n$ arrives, the algorithm compares the latency of node $n$, $D_n(\hat{\lambda}_{ij}, \mu_{in})$, to the threshold $\gamma \hat{u}$. If the latency of node $n$ is smaller than the threshold, fog node $i$ will immediately select node $n$. This procedure is repeated until fog node $i$ observes \!$N$ arrivals and selects \!$\hat{J}$ neighboring nodes. In the  proposed algorithm, the initial fog node needs to discover the neighboring nodes and know the information on the communication and computational performance of the neighboring nodes. This procedure can use any node-discovery and message exchanging protocols developed for D2D communications or wireless sensor networks. Also, our framework requires a low signaling and communication overhead  since each neighboring node can transmit its location and computing speed using a very small packet after which the initial fog node transmits a decision on node selection using a single bit. After the fog network is formed, the task distribution is done to minimize latency. In the next section, we investigate the property of the optimal task distribution, and show that the threshold can satisfy \eqref{cr}. 
 
 \vspace{-3mm}
\subsection{Task Distribution: Offline Optimization}\label{sec:offline}

Once the nodes are selected to form a network, the task distribution can be performed using an offline optimization problem which can be solved using known algorithms such as the interior-point algorithm \cite{nocedal2006numerical}. From problem \eqref{problem2}, the following properties can be derived, for a given~$\cJ$. 
\begin{theorem}\label{theorem1}
If there exists a task distribution $\boldsymbol{\alpha}^*$ satisfying $u^*=D_i(\lambda_i) = D_c(\lambda_c, \mu_c) = D_j(\lambda_{ij}, \mu_{ij})$, $\forall j \in \cJ$, then $\boldsymbol{\alpha}^*$ is the unique and optimal solution of problem~\eqref{problem1}.
\end{theorem}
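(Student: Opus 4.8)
The plan is to exploit the special structure of problem~\eqref{problem1}: for a fixed node set $\cJ$, the latency at each computing node depends \emph{only} on the single fraction of tasks routed to that node, and the only coupling among the decision variables is the linear budget constraint~\eqref{problem1:c1}. Writing $k$ for a generic node in $\{i,c\}\cup\cJ$ and $\alpha_k$ for its allocation fraction (so $\lambda_k=\alpha_k x_i$), the key structural fact I would establish first is that each latency $D_k(\lambda_k)$ is continuous and \emph{strictly increasing} in $\alpha_k$ over its stability region. I would then combine this monotonicity with the equality $\sum_k\alpha_k=1$ to force any feasible point that is no worse than $\boldsymbol{\alpha}^*$ to coincide with $\boldsymbol{\alpha}^*$, which yields optimality and uniqueness simultaneously.

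First I would prove the monotonicity claim. Each of $D_i$, $D_c$, $D_j$ is a sum of the M/D/1 waiting term $\tfrac{\lambda}{2\mu(\mu-\lambda)}$, a constant service/fetch term, and a linear processing term $\omega\lambda$. Differentiating the waiting term gives $\tfrac{d}{d\lambda}\tfrac{\lambda}{2\mu(\mu-\lambda)}=\tfrac{1}{2(\mu-\lambda)^2}>0$, and the linear term contributes the positive slope $\omega$; hence each $D_k$ is strictly increasing in $\lambda_k$, and therefore in $\alpha_k$, on the stability interval. For $D_j$ the waiting term appears twice, once with $\mu_{ij}$ and once with $\mu_j$, and both derivatives are positive, so the relevant bound is $\lambda_{ij}<\min(\mu_{ij},\mu_j)$ from \eqref{problem1:c4}. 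In particular each $D_k$ is a strictly increasing bijection from its feasible interval, so $D_k(\lambda'_k)\le D_k(\lambda^*_k)$ holds if and only if $\alpha'_k\le\alpha^*_k$.

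With monotonicity in hand, I would argue as follows. By hypothesis $\boldsymbol{\alpha}^*$ is feasible and equalizes all node latencies at the common value $u^*$, so its objective in \eqref{problem1} is exactly $u^*$. Let $\boldsymbol{\alpha}'$ be any feasible distribution with objective $u'=\max_k D_k(\lambda'_k)\le u^*$. Then for every node $k$ we have $D_k(\lambda'_k)\le u'\le u^*=D_k(\lambda^*_k)$, and strict monotonicity yields $\alpha'_k\le\alpha^*_k$. Summing over all $k$ and invoking \eqref{problem1:c1} gives $1=\sum_k\alpha'_k\le\sum_k\alpha^*_k=1$, which forces $\alpha'_k=\alpha^*_k$ for every $k$, i.e.\ $\boldsymbol{\alpha}'=\boldsymbol{\alpha}^*$. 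This single chain proves both claims: it rules out any strictly better point, since $u'<u^*$ would make every inequality $\alpha'_k<\alpha^*_k$ strict and contradict the budget, establishing optimality; and it shows the minimizer is unique.

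The only delicate point, and the step I would treat most carefully, is the monotonicity/boundary behavior underpinning the whole argument: one must confirm that the feasible range is the \emph{open} stability interval $\lambda_k<\mu_k$, since at the boundary the M/D/1 term blows up and the strict-increase-to-a-bijection statement must be read on the open interval. Everything else is a short exchange argument. As a cross-check I would note the alternative route via convexity: each $D_k$ is convex in $\lambda_k$ because $\lambda\mapsto\lambda/(\mu-\lambda)$ is convex, so the reformulation \eqref{problem2} is a convex program over a simplex for which the equalizing point is a KKT solution; but the direct monotonicity argument above is cleaner and delivers uniqueness without introducing multipliers.
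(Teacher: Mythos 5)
Your proposal is correct and rests on the same two ingredients as the paper's own proof: strict monotonicity of each node's latency $D_k$ in its allocated fraction $\alpha_k$, combined with the budget constraint \eqref{problem1:c1}, used in an exchange-style argument to show that any distribution differing from the equalizing one must push some node's latency above $u^*$. Your packaging (coordinate-wise inequalities $\alpha'_k\le\alpha^*_k$ summed against the budget, yielding optimality and uniqueness in one chain, plus the explicit derivative $\tfrac{1}{2(\mu-\lambda)^2}$ and the remark about the open stability interval) is a somewhat tighter write-up of the paper's pairwise node-$A$/node-$B$ exchange, but it is not a genuinely different route.
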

\begin{proof}
Let $\boldsymbol{\alpha}$ be the initial task distribution, and assume that any other task distribution  $\boldsymbol{\alpha'}$ different from  $\boldsymbol{\alpha}$ is the optimal distribution. When $\boldsymbol{\alpha'}$ is considered, we can find a certain node $A$ satisfying $\alpha'_A < \alpha_A$ where $\alpha'_A \in \boldsymbol{\alpha'}$ and $\alpha_A \in \boldsymbol{\alpha}$. This, in turn, yields $D_A(\alpha'_A) < D_A(\alpha_A)$. Due to the constraint~\eqref{problem1:c1}, there exists another node $B$ such that $B\neq A$, $\alpha'_B > \alpha_B$, and $D_B(\alpha'_B) > D_B(\alpha_B)$ where $\alpha'_B \in \boldsymbol{\alpha'}$ and $\alpha_B \in \boldsymbol{\alpha}$. Since $D_B(\alpha'_B) > D_B(\alpha_B) =  D_A(\alpha_A) > D_A(\alpha'_A)$, we must decrease $\alpha'_B$ to minimize the maximum, i.e., $D_B(\alpha'_B)$. Thus, we can clearly see that  $\boldsymbol{\alpha'}$ is not optimal, and, thus, the initial distribution $\boldsymbol{\alpha}$ is optimal.

Furthermore, $D_j(\lambda_{ij},\mu_{ij})$ is a monotonically increasing function with respect to $\lambda_{ij}=x_i \alpha_{ij}$ since $\frac{\partial}{\partial \lambda_{ij}}D_j(\lambda_{ij},\mu_{ij}) > 0$. Therefore, there are no more than two points of $\boldsymbol{\alpha}^*$ that have the same $u^*$. Hence, the distribution $\boldsymbol{\alpha}$ is unique and optimal. 
\end{proof}
Theorem~\ref{theorem1} shows that the optimal solution of the offline latency minimization problem results in an equal latency for all fog nodes and the cloud on the network (whenever such a solution  is feasible). Using the objective function in \eqref{problem1}, the initial fog node minimizes the worst-case latency among the nodes. To that end, the initial fog node can decrease the task arrival rate of the node having the highest latency, but, in turn, the latency of other node increases. This is due to the fact that reducing one node's task arrival rate leads to increase the other node's arrival rate since we have $\sum_{j\in \cJ} \lambda_{ij} + \lambda_i + \lambda_c = x$. Therefore, as shown in Theorem~\ref{theorem1}, an equal latency for all fog nodes and the cloud is obtained by repeatedly reducing the arrival rate of the node having the highest latency. According to Theorem~\ref{theorem1}, selecting the node that has high computing resources is beneficial to minimize latency. Once fog node $i$ determines the task distribution, the efficiency of the task distribution can be derived by applying the definition of task scheduling efficiency in \cite{mirshekarian2016correlation}. For a task distribution $\boldsymbol{\alpha}$, the \emph{efficiency} is given by 
\begin{equation}\label{efficiency}
 \Gamma = 1+ \frac{  \displaystyle{\sum_{k\in\{i,c, \{ij| j\in\cJ\}\}}} \hspace{-4mm} \max \Bigg(\splitfrac{D_i(\alpha_i),}{\hspace{-4mm}\splitfrac{D_c(\alpha_c, \mu_c),}{\hspace{-3mm}D_{j\in\cJ}(\alpha_{ij}, \mu_{ij})}} \Bigg) - D_k}{D_i(\alpha_i) + D_c(\alpha_c) + \sum_{j\in\cJ}D_j(\alpha_{ij})}  \geq 1. 
\end{equation}
In other words, $\Gamma$ is defined as one plus the ratio between the total idle time of the fog computing nodes and the total transmission and computing time. Therefore, $\Gamma=1$ means that all nodes in the fog network can complete their assigned tasks with the same latency. Theorem~\ref{theorem1} shows that the optimal latency is $u^*=D_i(\lambda_i) = D_c(\lambda_c, \mu_c) = D_j(\lambda_{ij},\mu_{ij})$. Since $u^*$ is the maximum value among $D_i(\lambda_i)$, $D_c(\lambda_c,\mu_c)$, and $D_j(\lambda_{ij}, \mu_{ij})$, from \eqref{problem1}, the efficiency of the optimal task distribution will be equal to one. Thus, if the efficiency of the task distribution becomes one, the latency of the task distribution is the optimal latency $u^*$ according to Theorem~\ref{theorem1}. 

\subsection{Performance Analysis of the Proposed Online Optimization Framework}

Next, we show that the proposed framework can achieve the target competitive ratio~$\gamma$. 
\begin{theorem}\label{theorem2}
For a given $\gamma$, the proposed framework satisfies ${\textrm{ALG}(\boldsymbol\sigma)}/{\textrm{OPT}(\boldsymbol\sigma)} \leq  \gamma$ if: (i) a given $\gamma$ enables fog node $i$ to select $\hat{J}$ nodes, and (ii) the optimal task distribution can always be found, i.e., $\Gamma=1$.  
\end{theorem}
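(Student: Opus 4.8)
The plan is to sandwich the ratio $\textrm{ALG}(\boldsymbol\sigma)/\textrm{OPT}(\boldsymbol\sigma)$ between $1$ and $\gamma$ by proving a lower bound on the denominator and a matching upper bound on the numerator, both expressed through the ideal-node latency $\hat{u}$ computed in Phase~1 of Algorithm~\ref{algorithm}. The lower bound $1 \leq \textrm{ALG}(\boldsymbol\sigma)/\textrm{OPT}(\boldsymbol\sigma)$ is immediate since $\textrm{OPT}(\boldsymbol\sigma)$ is by definition optimal, so the work lies in showing the upper bound is $\gamma$.

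First I would establish that $\hat{u} \leq \textrm{OPT}(\boldsymbol\sigma)$. By construction, $\hat{u}$ is the minimum value of the objective in \eqref{problem2} when the network is built from $\hat{J}$ ideal nodes $\bar{j}$, i.e., nodes attaining the largest computing service rate $\bar\mu_j$, processing speed $1/\underline\omega_j$, and transmission service rate $\bar\mu_{ij}$. Since every actually arriving node has parameters no better than those of an ideal node, substituting the ideal quantities and the minimizing network size into any feasible offline node set and task distribution can only decrease each per-node latency $D_i$, $D_c$, $D_j$. Hence no feasible offline solution can attain a maximum latency below $\hat{u}$, which gives $\textrm{OPT}(\boldsymbol\sigma) \geq \hat{u}$.

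Next I would upper-bound the numerator by $\gamma \hat{u}$. Condition (i) guarantees that Phase~2 selects exactly $\hat{J}$ nodes, and the acceptance test on line~10 ensures each selected node $n$ satisfies $D_n(\hat{\lambda}_{ij}, \mu_{in}) \leq \gamma\hat{u}$ under the initial equal task distribution $\hat{\lambda}_{ij}$. For the remaining two computing entities, I would note that under this same initial distribution the latencies $D_i(\hat{\lambda}_i)$ and $D_c(\hat{\lambda}_c, \mu_c)$ are unchanged from the ideal configuration, since fog node~$i$ and the cloud keep their own parameters and the network size is fixed at $\hat{J}$ so the per-node bandwidth and thus $\mu_c$ are the same; by Theorem~\ref{theorem1} applied to the ideal configuration both equal $\hat{u} \leq \gamma\hat{u}$. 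Therefore the initial distribution applied to the formed network $\cJ$ already yields a maximum latency of at most $\gamma\hat{u}$. Invoking condition (ii), the offline task-distribution step returns the optimal $\boldsymbol\alpha$ with $\Gamma=1$, whose maximum latency $\textrm{ALG}(\boldsymbol\sigma)=u^*$ is, by optimality, no larger than that of the initial distribution; hence $\textrm{ALG}(\boldsymbol\sigma) \leq \gamma\hat{u}$.

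Combining the two bounds gives $\textrm{ALG}(\boldsymbol\sigma)/\textrm{OPT}(\boldsymbol\sigma) \leq \gamma\hat{u}/\hat{u} = \gamma$, which is the claim. I expect the main obstacle to be the first step: rigorously arguing $\textrm{OPT}(\boldsymbol\sigma) \geq \hat{u}$ requires a careful monotonicity argument showing that each latency term $D_i$, $D_c$, $D_j$ is decreasing in the relevant service rates and processing speed, so that replacing actual parameters by ideal ones and using the minimizing size $\hat{J}$ genuinely lower-bounds every offline solution, including those that employ a different number of nodes, which alters the per-node bandwidth and must be handled by the same incremental network-size analysis used in Phase~1 of Algorithm~\ref{algorithm}.
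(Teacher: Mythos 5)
Your proof is correct and follows essentially the same route as the paper's: lower-bound $\textrm{OPT}(\boldsymbol\sigma)$ by the ideal-configuration latency $\hat{u}$, upper-bound the formed network's latency under the provisional distribution $\hat{\lambda}_{ij}$ by $\gamma\hat{u}$ via the line-10 threshold test together with Theorem~\ref{theorem1} (which gives $D_i(\hat{\lambda}_i)=D_c(\hat{\lambda}_c,\mu_c)=\hat{u}\leq\gamma\hat{u}$), and observe that the subsequent offline task-distribution step can only reduce latency, so $\textrm{ALG}(\boldsymbol\sigma)\leq\gamma\hat{u}\leq\gamma\,\textrm{OPT}(\boldsymbol\sigma)$. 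The only notable difference is cosmetic: you justify $\textrm{OPT}(\boldsymbol\sigma)\geq\hat{u}$ more explicitly (covering offline solutions with a different network size via the Phase-1 minimization), a point the paper asserts without elaboration.
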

\begin{proof}

The offline optimal latency of the nodes in $\cJ$ is greater than or equal to $\hat{u}$, i.e., $\hat{u} \leq \textrm{OPT}(\boldsymbol\sigma)$. Also, in Algorithm~\ref{algorithm}, the selected nodes satisfy $D_j(\hat{\lambda}_{ij}, \mu_{ij}) \leq \gamma  \hat{u}$, $\forall j\in\cJ$ where $|\cJ|=\hat{J}$. When the task distribution is not yet optimized with respect to $\cJ$, the latency that results from using distribution $\!\{\hat{\lambda}_{i}, \!\hat{\lambda}_{c}, \!\hat{\lambda}_{ij}\}\!$ can be shown as $\textrm{ALG}_b(\boldsymbol{\sigma}) \!=\! \max\! \left(\! D_i(\hat{\lambda}_{i}), \!D_c(\hat{\lambda}_{c}, \mu_c), \!D_{j \in \cJ}(\hat{\lambda}_{ij}, \mu_{ij})\!\right)\!$. Recall that $\hat{u}  \triangleq \max \left(D_i(\hat{\lambda}_{i}), \; D_c(\hat{\lambda}_{c}, \mu_c), \; D_{\bar{j}}(\hat{\lambda}_{ij}, \mu_{i\bar{j}})\right)$, and, by Theorem~\ref{theorem1}, $\hat{u}= D_i(\hat{\lambda}_{i}) = D_c(\hat{\lambda}_{c}, \mu_c) = D_{j}(\hat{\lambda}_{ij}, \bar\mu_{ij})$. Since the service rates and computing speeds of selected node $j\in\cJ$ are less than or equal to those of the ideal node, i.e, $\mu_{ij} \leq \bar\mu_{ij}$, $\mu_{j} \leq \bar\mu_{j}$, and $1/\omega_j \leq 1/\underline\omega_j$, we have $\hat{u} \leq  D_{j \in \cJ}(\hat{\lambda}_{ij}, \mu_{ij})$. Therefore, we have $\textrm{ALG}_b(\boldsymbol{\sigma})
= \max \left(\hat{u}, \; D_{j}(\hat{\lambda}_{ij}, \mu_{ij})\right) \nonumber
= \max \left(D_{j}(\hat{\lambda}_{ij}, \mu_{ij})\right) \nonumber
\leq \gamma \hat{u}, \forall j\in\cJ$. 
By optimizing the task distribution for the nodes in $\cJ$, the latency can be further reduced, i.e, $\textrm{ALG}(\boldsymbol\sigma) \leq \textrm{ALG}_b(\boldsymbol\sigma)$. Hence, it is possible to conclude that $\textrm{ALG}(\boldsymbol\sigma) \leq \textrm{ALG}_b(\boldsymbol{\sigma}) \leq \gamma  \hat{u} \leq \gamma  \textrm{OPT}(\boldsymbol\sigma)$ and, therefore, ${\textrm{ALG}(\boldsymbol\sigma)}/{\textrm{OPT}(\boldsymbol\sigma)} \leq \gamma$. 
\end{proof}\vspace{-2mm}
\noindent This result shows that the online optimization framework can achieve the target competitive ratio $\gamma$ by determining a proper number of neighboring nodes $\hat{J}$ and optimizing the task distribution. According to Theorem~\ref{theorem2}, the ratio between the latency achieved by executing one iteration of the proposed framework and an offline optimal latency can be bounded by the value of $\gamma$. 

To satisfy the first condition of Theorem~\ref{theorem2}, the proper value of $\gamma$ needs to be found iteratively as shown in Fig.~\ref{fig:flowchart}. Then, we prove that $\gamma$ converges to an upper bound. For this proof, we define the lowest transmission service rate as $\underline \mu_{ij}$ when the maximum of $d_{in}$ is $\bar{d}_{ij}$. Also, the lowest computing service rate and the lowest processing speed are defined as $\underline \mu_{j}$ and $1/\bar\omega_{j}$, respectively.
\begin{theorem}\label{theorem3}
The target competitive ratio $\gamma$ converges to ${ D_j(\hat{\lambda}_{ij}, \underline\mu_{ij})}/{\hat{u}}$ if: (i) a given $\gamma$ enables fog node $i$ to select $\hat{J}$ nodes, and (ii) the optimal task distribution can always be found, i.e., $\Gamma\!=\!1$.
\end{theorem}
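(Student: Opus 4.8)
The plan is to recognize that the iterative update produces a monotonically increasing sequence of $\gamma$ values and to exhibit a finite ceiling at which the network-formation stage is guaranteed to succeed; the monotone convergence theorem then delivers the result, with the claimed expression serving as the worst-case limit. First I would recall from the flow chart in Fig.~\ref{fig:flowchart} that the framework initializes $\gamma$ (say at $\gamma=1$) and increments it by the fixed constant $\tau$ precisely when the current value fails to let fog node $i$ select $\hat{J}$ neighbors within the $N$ observations. Hence $\{\gamma_k\}$ is strictly increasing with $\gamma_{k+1}=\gamma_k+\tau$, and it suffices to show that selection is guaranteed once $\gamma$ reaches a finite value; the update rule must then halt, and a monotone sequence that stops increasing converges.

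The second step is to identify the worst arriving node and bound its latency. By the definition preceding the theorem, this node has the lowest transmission service rate $\underline\mu_{ij}$ (attained when $d_{in}$ equals its maximum $\bar{d}_{ij}$), the lowest computing service rate $\underline\mu_{j}$, and the lowest processing speed $1/\bar\omega_{j}$. Since $D_j=T_j+S_j$ is monotonically decreasing in each service rate $\mu_{ij}$ and $\mu_j$ and increasing in $\omega_j$ (which follows directly by differentiating the M/D/1 expressions in \eqref{Tj} and \eqref{Sj}), any arriving node $n$ assigned the tentative load $\hat{\lambda}_{ij}$ obeys $D_n(\hat{\lambda}_{ij}, \mu_{in}) \leq D_j(\hat{\lambda}_{ij}, \underline\mu_{ij})$, where the right-hand side is understood to use all three worst-case parameters $(\underline\mu_{ij}, \underline\mu_{j}, 1/\bar\omega_{j})$.

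The third step ties this bound to the threshold test in Phase~2 of Algorithm~\ref{algorithm}. Setting $\gamma^{*}=D_j(\hat{\lambda}_{ij}, \underline\mu_{ij})/\hat{u}$, the threshold becomes $\gamma^{*}\hat{u}=D_j(\hat{\lambda}_{ij}, \underline\mu_{ij})$. By the previous bound, the acceptance condition $D_n(\hat{\lambda}_{ij}, \mu_{in}) \leq \gamma^{*}\hat{u}$ then holds for \emph{every} node, so each arrival is admitted and $\hat{J}$ nodes are selected within the $N \geq \hat{J}$ observations; condition~(i) is thus satisfied for every $\gamma \geq \gamma^{*}$. Consequently the update rule cannot advance $\gamma$ beyond $\gamma^{*}$ (up to at most one residual increment $\tau$). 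The sequence $\{\gamma_k\}$ is therefore monotone and bounded above by $\gamma^{*}$, so by the monotone convergence theorem it converges, with the worst-case limit equal to $\gamma^{*}=D_j(\hat{\lambda}_{ij}, \underline\mu_{ij})/\hat{u}$; here condition~(ii), $\Gamma=1$, ensures the equal-latency optimum of Theorem~\ref{theorem1} so that $\hat{u}$ and the threshold logic are well defined.

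I expect the main obstacle to be the monotonicity argument of the second step: one must verify rigorously that $D_j(\hat{\lambda}_{ij}, \cdot)$ decreases in the service rates so that the node with parameters $(\underline\mu_{ij}, \underline\mu_{j}, 1/\bar\omega_{j})$ genuinely dominates the latency of every admissible node, keeping in mind that the shorthand $D_j(\hat{\lambda}_{ij}, \underline\mu_{ij})$ suppresses the dependence on $\mu_j$ and $\omega_j$. A secondary subtlety is reconciling the discrete step $\tau$ with the phrase \emph{converges to}: strictly, the iteration may halt at a value slightly below $\gamma^{*}$, so the claim is best read as $\gamma^{*}$ being the supremum that upper-bounds the sequence and thereby guarantees its convergence.
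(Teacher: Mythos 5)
Your proposal is correct and takes essentially the same route as the paper's proof: both identify $\hat{u}$ (the ideal-node latency from Phase 1) as the lower bound $\min_{\boldsymbol\sigma'}\textrm{OPT}(\boldsymbol\sigma')$ and the worst-case latency $D_j(\hat{\lambda}_{ij}, \underline\mu_{ij})$ — evaluated at the farthest distance $\bar{d}_{ij}$ and the worst parameters $\underline\mu_{ij}$, $\underline\mu_{j}$, $1/\bar\omega_{j}$ — as the upper bound $\max_{\boldsymbol\sigma'}\textrm{ALG}(\boldsymbol\sigma')$, and both conclude from the monotone increase of $\gamma$ plus this bound. If anything, your write-up is more explicit than the paper's on the step it leaves implicit, namely that once $\gamma \geq \gamma^{*}$ the Phase-2 threshold satisfies $\gamma\hat{u} \geq D_j(\hat{\lambda}_{ij}, \underline\mu_{ij}) \geq D_n(\hat{\lambda}_{ij}, \mu_{in})$ for every arrival, so formation with $\hat{J}$ nodes always succeeds and the updates halt; your closing caveat that the discrete step $\tau$ pins the limit down only to within one increment of $\gamma^{*}$ is likewise a fair and more careful reading of the theorem's ``converges to'' claim.
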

\begin{proof}
We show that there exists an upper bound of $\gamma$ denoted by $\bar\gamma$. Therefore, for a given sequence $\boldsymbol\sigma$, we show that $
 \frac{\textrm{ALG}(\boldsymbol\sigma)}{\textrm{OPT}(\boldsymbol\sigma)} \leq 
  \frac{\max_{\boldsymbol\sigma'}\textrm{ALG}(\boldsymbol\sigma')}{\min_{\boldsymbol\sigma'}\textrm{OPT}(\boldsymbol\sigma')}=\bar\gamma,
$ where $\boldsymbol\sigma'$ denotes any sequence. In the first phase of Algorithm~\ref{algorithm}, since $\hat{u}$ is calculated by assuming that all neighboring nodes are ideal nodes, the lower bound of the offline latency for any sequence is given by $\min_{\boldsymbol\sigma'}\textrm{OPT}(\boldsymbol\sigma')=\hat{u}$. Also, if $\hat{J}$ neighboring nodes are located at the farthest distance $\bar d_{ij}$, the lowest fog transmission service rate denoted as $\underline\mu_{ij}$ is derived. Then, the worst case is defined by assuming that the neighboring nodes have the lowest service rates and computing speed, i.e., $\underline\mu_{ij}$, $\underline\mu_{j}$, and $1/\bar\omega_{j}$. Therefore,  the latency in the worst case can be presented by $\max_{\boldsymbol\sigma'}\textrm{ALG}(\boldsymbol\sigma') = D_j(\hat{\lambda}_{ij}, \underline\mu_{ij})$. Finally, $\gamma$ always increases when it is updated, and, hence, $\gamma$ converges to a competitive ratio given by $\bar{\gamma}=\frac{ D_j(\hat{\lambda}_{ij}, \underline\mu_{ij})}{\hat{u}}$. 
\end{proof}
\noindent Therefore, the proposed  framework is able to find the target competitive ratio by iteratively updating $\gamma$ when  $\bar{d}_{ij}$, $\underline \mu_{j}$, and $1/\bar\omega_{j}$ are not known to fog node $i$. Thus, once $\gamma$ is found through the iterative process, Algorithm~\ref{algorithm} is used to select the neighboring nodes, and the tasks  are offloaded to the neighboring nodes as stated in Theorem~\ref{theorem1}. As a result, the proposed framework yields the set of $\hat{J}$ selected neighboring nodes and the corresponding task distribution that can achieve the target competitive ratio as shown in Theorem~\ref{theorem2}. 

The upper bound in Theorem~\ref{theorem3} is the performance in the worst case if a given $\gamma$ enables fog node $i$ to select $\hat{J}$ neighboring nodes, and the optimal task distribution can always be found, i.e., $\Gamma=1$. If  the first condition on the network size in Theorem~\ref{theorem3} cannot be satisfied, $\gamma$ is updated. When the target competitive ratio $\gamma$ converges to $\bar{\gamma}$, the number of iterations tends to infinity since the value of $\gamma$ asymptotically approaches to $\bar{\gamma}$. In particular, as $\gamma$ becomes closer to $\bar{\gamma}$, the probability of updating $\gamma$ decreases exponentially. Therefore, after running a finite, large number of iterations, the probability of updating $\gamma$ can become marginal. When the current value of $\gamma$ is rarely updated, the first condition on the network size in Theorem~\ref{theorem3} is assumed to be satisfied, and, thus, the iteration process used to update $\gamma$ will terminate. In doing so, the final value of $\gamma$ that is smaller than $\bar{\gamma}$ can be used to further reduce the latency of the formed fog network.  

To this end, we derive a lower bound of the probability, with respect to $\gamma$, that the initial fog node forms a fog network with $\hat{J}$ neighboring nodes in an iteration including $N$ observations. To derive a statistical result, we assume that the values of the communications and computing capabilities of neighboring nodes are random variables. For example, the distance, $d_{in}$, between the initial node and a neighboring node is a random variable within a finite range $[\underline d_{ij}, \bar d_{ij}]$, and, therefore, the service rate $\mu_{in}$ from \eqref{eq:mu} is a random variable in the range $[\underline \mu_{ij}, \bar \mu_{ij}]$. Also, a neighboring node's computing service rate $\mu_n$ and computing delay $\omega_n$ can be modeled as random variables that lie in the finite ranges $[\underline\mu_j, \bar\mu_j]$ and $[\underline\omega_j, \bar\omega_j]$, respectively. 
\begin{proposition}\label{proposition1}
The probability that the initial fog node forms a fog network with $\hat{J}$ neighboring nodes  in an iteration including $N$ observations is at least $p'(\gamma) = \sum_{k=\hat{J}}^N  {N\choose k} {p'_s}^k (1-p'_s)^{N-k}$ where $p'_s = 
F_{d_{in}} \scriptstyle{\left( \!\left[ \frac{W_l N_0}{\beta_1 P_{\textrm{tx},i}} \left( 2^{ \left(\frac{1}{\gamma} (\bar\mu_{ij}(\underline x_{ij})-\hat{\lambda}_{ij})+ \hat{\lambda}_{ij} \right) \frac{K}{W_l} } - 1 \right)  \right]^{\frac{-1}{\beta_2}} \right)} $ $
\left(1 - F_{\mu_n} \!\! \left( \frac{1}{\gamma}(\bar{\mu}_j - \hat{\lambda}_{ij}) + \hat{\lambda}_{ij} \right) \right) 
F_{\omega_n} \left( \gamma \underline\omega_j \right)$.
\end{proposition}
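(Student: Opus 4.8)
The plan is to reduce the formation probability to a binomial tail. I would model the $N$ arrivals as independent and identically distributed trials, show that each arriving node is admitted by Phase~2 of Algorithm~\ref{algorithm} with probability at least $p'_s$, and then bound from below the probability of obtaining at least $\hat{J}$ admissions among the $N$ trials. The first step is to characterize the single-node admission event: by Phase~2, node $n$ is admitted exactly when $D_n(\hat{\lambda}_{ij},\mu_{in})\le\gamma\hat{u}$, and the left-hand side is a function of the random triple $(d_{in},\mu_n,\omega_n)$ through $\mu_{in}$ (determined by $d_{in}$ via~\eqref{eq:mu}), the computing rate $\mu_n$, and the processing constant $\omega_n$. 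The crucial structural fact is monotonicity: $D_n$ is decreasing in $\mu_{in}$ and in $\mu_n$ and increasing in $\omega_n$, the same monotonicity already exploited in Theorem~\ref{theorem1}.

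Next I would construct a product-form sufficient condition for admission by splitting the latency budget $\gamma\hat{u}$ across the transmission queue, the computation queue, and the processing term, assigning to each a one-sided threshold on the corresponding variable, namely $\mu_{in}\ge\frac{1}{\gamma}(\bar\mu_{ij}-\hat{\lambda}_{ij})+\hat{\lambda}_{ij}$, $\mu_n\ge\frac{1}{\gamma}(\bar\mu_{j}-\hat{\lambda}_{ij})+\hat{\lambda}_{ij}$, and $\omega_n\le\gamma\underline\omega_j$. Invoking Theorem~\ref{theorem1} to write $\hat{u}=D_{\bar j}(\hat{\lambda}_{ij},\bar\mu_{ij})$ as the sum of its transmission, computation-queue, and processing parts evaluated at the ideal rates $(\bar\mu_{ij},\bar\mu_j,\underline\omega_j)$, I would show that whenever the three thresholds hold simultaneously the total latency stays below $\gamma\hat{u}$, so that the admission event contains the product event $E=\{\mu_{in}\ge\mu_{ij}^\ast\}\cap\{\mu_n\ge\mu_j^\ast\}\cap\{\omega_n\le\gamma\underline\omega_j\}$.

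Then I would evaluate $P(E)$ using independence of the three capability variables, which factorizes $P(E)$ into three marginals. For the transmission factor I would invert the monotone (decreasing) map~\eqref{eq:mu} between $\mu_{in}$ and $d_{in}$: solving $\mu_{in}=\mu_{ij}^\ast$ for $d_{in}$ turns the rate condition into a distance condition $d_{in}\le d^\ast$, so $P(\mu_{in}\ge\mu_{ij}^\ast)=F_{d_{in}}(d^\ast)$, where $d^\ast$ is precisely the argument appearing in the first factor of $p'_s$. The remaining marginals are immediate: $P(\mu_n\ge\mu_j^\ast)=1-F_{\mu_n}(\mu_j^\ast)$ and $P(\omega_n\le\gamma\underline\omega_j)=F_{\omega_n}(\gamma\underline\omega_j)$, so the per-node admission probability satisfies $p_s\ge P(E)=p'_s$. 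Finally, since the number of admitted nodes among the $N$ arrivals is $\mathrm{Binomial}(N,p_s)$, the network is formed with $\hat{J}$ nodes iff at least $\hat{J}$ arrivals are admitted; as the binomial survival function increases in its success probability and $p_s\ge p'_s$, the formation probability is at least $\sum_{k=\hat{J}}^{N}\binom{N}{k}{p'_s}^k(1-p'_s)^{N-k}=p'(\gamma)$.

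The main obstacle is the second step. Because the M/D/1 delay terms $\frac{\lambda}{2\mu(\mu-\lambda)}+\frac{1}{\mu}$ depend nonlinearly on the service rates, verifying that the three per-component thresholds of the form $\frac{1}{\gamma}(\bar\mu-\hat{\lambda}_{ij})+\hat{\lambda}_{ij}$ and $\gamma\underline\omega_j$ are \emph{jointly} sufficient for $D_n\le\gamma\hat{u}$ — i.e.\ pinning down a valid budget split of $\gamma\hat{u}$ across the coupled transmission and computation queueing terms — is delicate, since scaling a single service-rate margin by $1/\gamma$ does not cleanly scale the corresponding waiting term by $\gamma$. Once that sufficiency is established, the remaining ingredients (independence-based factorization, monotone inversion of~\eqref{eq:mu}, and monotonicity of the binomial tail) are routine.
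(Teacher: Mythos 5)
Your skeleton coincides with the paper's proof: the same three one-sided thresholds (your product event $E$ is the paper's $E_1\cap E_2\cap E_3$), the same independence-based factorization into $F_{d_{in}}$, $1-F_{\mu_n}$ and $F_{\omega_n}$ via monotone inversion of \eqref{eq:mu}, and the same monotonicity of the binomial tail to pass from $p_s\geq p'_s$ to the stated bound. However, the one step that carries all the mathematical content --- that the three thresholds are \emph{jointly sufficient} for the admission condition $D_n(\hat{\lambda}_{ij},\mu_{in})\leq\gamma\hat{u}$ --- is exactly the step you leave open, calling it delicate and unresolved. As written, the proposal is therefore an outline whose central lemma is missing, and that is a genuine gap.

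The good news is that the step is not delicate: no budget split across coupled queueing terms needs to be ``pinned down,'' because the M/D/1 delay admits an exact decomposition that makes a term-by-term comparison work. Note that $\frac{\lambda}{2\mu(\mu-\lambda)}+\frac{1}{\mu}=\frac{1}{2}\left(\frac{1}{\mu-\lambda}+\frac{1}{\mu}\right)$. Hence $2D_n(\hat{\lambda}_{ij},\mu_{in})=\frac{1}{\mu_{in}-\hat{\lambda}_{ij}}+\frac{1}{\mu_{in}}+\frac{1}{\mu_{n}-\hat{\lambda}_{ij}}+\frac{1}{\mu_{n}}+2\omega_n\hat{\lambda}_{ij}$, while $2\gamma\hat{u}$ is $\gamma$ times the same five-term sum evaluated at $(\bar\mu_{ij},\bar\mu_j,\underline\omega_j)$, since by line 8 of Algorithm~\ref{algorithm} (equivalently, by Theorem~\ref{theorem1} as invoked in Theorem~\ref{theorem2}) $\hat{u}$ equals the ideal node's delay $D_{\bar j}(\hat{\lambda}_{ij},\bar\mu_{ij})$. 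Now compare term by term: your threshold $\mu_{in}\geq\frac{1}{\gamma}(\bar\mu_{ij}-\hat{\lambda}_{ij})+\hat{\lambda}_{ij}$ is, by definition, equivalent to $\frac{1}{\mu_{in}-\hat{\lambda}_{ij}}\leq\gamma\frac{1}{\bar\mu_{ij}-\hat{\lambda}_{ij}}$, and because $\gamma\geq1$ it also implies $\mu_{in}\geq\frac{1}{\gamma}\bar\mu_{ij}+\hat{\lambda}_{ij}\left(1-\frac{1}{\gamma}\right)\geq\frac{1}{\gamma}\bar\mu_{ij}$, i.e.\ $\frac{1}{\mu_{in}}\leq\gamma\frac{1}{\bar\mu_{ij}}$ (this is the paper's observation that $\textrm{Pr}\{E_{1'}\mid E_1\}=1$); the identical argument handles the two $\mu_n$ terms, and $\omega_n\leq\gamma\underline\omega_j$ handles the processing term. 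Summing the five inequalities gives $D_n(\hat{\lambda}_{ij},\mu_{in})\leq\gamma\hat{u}$, which is precisely the containment of your product event in the admission event. So your plan is correct and is the paper's plan; what it lacks is this two-line algebraic verification, which becomes immediate once the waiting-plus-service delay is rewritten as $\frac{1}{2}\left(\frac{1}{\mu-\lambda}+\frac{1}{\mu}\right)$.
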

\begin{proof} See Appendix~\ref{proof_proposition1}. 
\end{proof}
\noindent By using the probability in Proposition~\ref{proposition1}, the first condition of Theorem~\ref{theorem3} can be replaced with the condition that $p'(\gamma)$ is very close to 1. This is due to the fact that, for a given $\gamma$, a fog network is always formed with $\hat{J}$ neighboring nodes if $p'(\gamma) = 1$. We define $\bar\gamma_s$ as the smallest value of $\gamma$ with which the initial fog node forms a network including $\hat{J}$ neighboring nodes with probability $p'(\gamma) = 1$ in an iteration including $N$ observations, i.e., $\bar\gamma_s = \min(\{\gamma | p'(\gamma)  =1\})$. 

\begin{figure}[]
\centering
\includegraphics[width=0.45\textwidth]{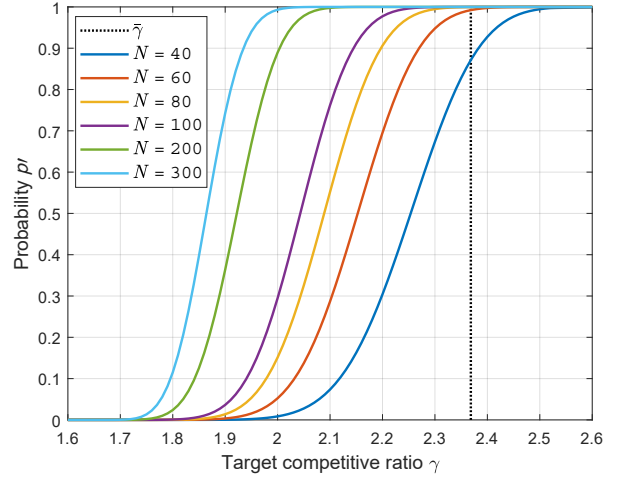}
\caption{{\color{black}Example of the probability $p'$ derived in Proposition~\ref{proposition1}.}}
\label{fig:proposition1}\vspace{-3mm}
\end{figure}

Fig.~\ref{fig:proposition1} shows the upper bound $\bar\gamma$ derived in Theorem~\ref{theorem3}. Fig.~\ref{fig:proposition1} also shows the probability $p'(\gamma)$ derived in Proposition~\ref{proposition1} with respect to the target competitive ratio $\gamma$ for different numbers of observations $N$. In Fig.~\ref{fig:proposition1},  the neighboring nodes are randomly located on a circular area with the maximum distance $\bar d_{ij}=50$~m. Also, $\mu_n$ and $\omega_n$  follow uniform distributions in the ranges $[15,40]$ and $[0.05,0.10]$, respectively. In Fig.~\ref{fig:proposition1}, we use $h=1$, $\hat{J}=6$, $\hat\lambda_{ij}=1.4$, and $l=1$. In Fig.~\ref{fig:proposition1}, if the initial fog node sets $\gamma = \bar{\gamma}_s$, we can see that $p'(\gamma)=1$ for a large value of $N$. For example, the probability $p'(\gamma)$ is one when $\gamma= 2.08$ and $N=300$. In this case, since the first condition of Theorem~\ref{theorem3} is satisfied with a probability close to one, the iteration process for updating $\gamma$ will terminate if the optimal task allocation is achieved. Also, Fig.~\ref{fig:proposition1} shows that $\bar{\gamma}_s$ becomes larger with small $N$. This is due to the fact that the initial fog node must increase $\bar{\gamma}_s$ to select its neighboring nodes within a small number of observations. Since $p'(\gamma)$ approaches to one with increasing $\gamma$, it is possible to determine $\bar\gamma_s$ by numerically finding the smallest $\gamma$ such that $p'(\gamma)$ is very close to 1. Then, in Fig.~\ref{fig:proposition1}, we can observe that $p'(\bar\gamma_s)$ becomes one. Consequently, by setting the initial value of the target competitive ratio $\gamma$ to $\bar{\gamma}_s$, the results of Proposition~\ref{proposition1} can be used to prevent any trial and error in the network formation stage. If the conditions of Theorem~\ref{theorem3} are satisfied, a network can be formed at once, and updating $\gamma$ is not required. To do so, the initial fog node however has to know the information assumed to derive $\bar{\gamma}_s$. When the information is unknown, the proposed framework in Fig.~\ref{fig:flowchart} can be used to iteratively optimize the target competitive ratio. 

\vspace{-3mm}
\section{Simulation Results and Analysis}\label{sec:simulations}

For our simulations, we use a MATLAB simulator\footnote{For further validation of our results, future works can implement the system on an actual fog networking testbed.} in which we consider an initial fog node that can connect to neighboring fog nodes uniformly distributed within a circular area of radius $50~\text{m}$. The arrival sequence of the  fog nodes follows a uniform distribution. The task arrival rate at fog node $i$ is $x_i=10$ packets per second. The computing service rate of the fog nodes is randomly drawn from a uniform distribution over a range of $15$ to $40$ packets per second. All statistical results are averaged over a large number of simulation runs. Similar to prior work \cite{lee2017online}, the simulation results are evaluated with the parameters listed in Table~\ref{table:para}. 

\begin{table}[t]
\centering
\caption{  Simulation parameters}
\label{table:para}
\begin{tabular}{|c|c|}
\hline\centering
\footnotesize Notation & Value   \\ \hline
\footnotesize  $\omega_i\!=\!\omega_j$, $\omega_c$ & $50$,  $25$ msec/packet\\ \hline 
\footnotesize  $\underline\mu_{i} =\underline\mu_{j}$, $\bar{\mu}_i=\bar{\mu}_j$ & $15$,  $40$ packet/sec \\ \hline
\footnotesize  $N$, $\tau$ & $300$, $0.002$ ($0.005$ in Fig.~\ref{fig:fig_main5_gamma}) \\ \hline
\footnotesize  $P_{\textrm{tx},i}$, $\beta_1$, $\beta_2$, {\color{blue}$h$} &$20$~dBm, $10^{-3}$, $4$, {\color{blue}1} \\ \hline
\footnotesize  $K$ &  $64$~kilobytes \\ \hline
\footnotesize  $B$, $N_0$ &  $3$~MHz, $-174$~dBm/Hz \\ \hline
\end{tabular}\vspace{-3mm}
\end{table}

\vspace{-3mm}
\subsection{Performance Evaluation of the Online Optimization Framework}

\begin{figure}[]
\centering
\includegraphics[width=0.45\textwidth]{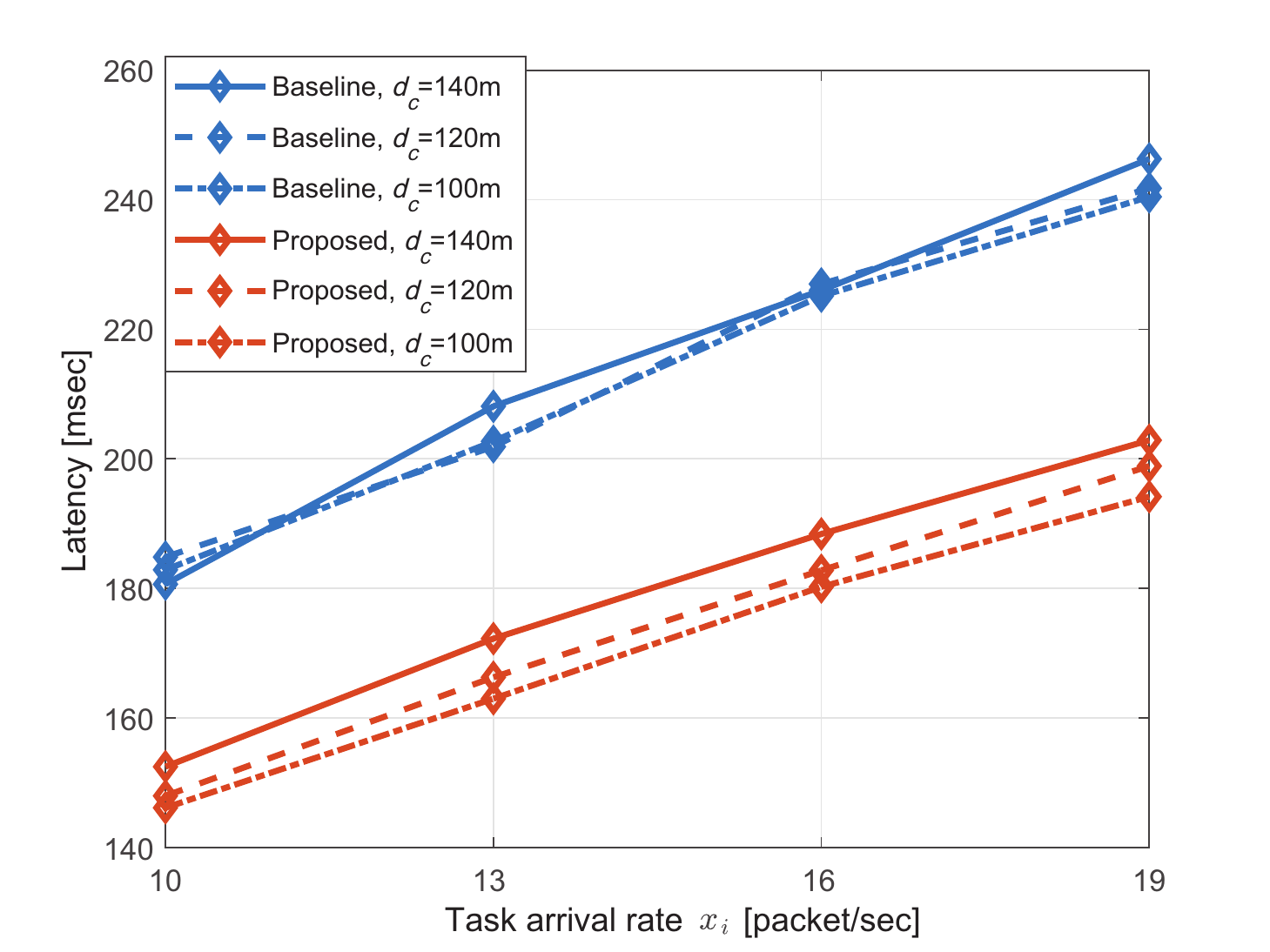}
\caption{Latency for different task arrival rates at the initial fog node $i$.}
\label{fig:fig_main5_mode1_latency}\vspace{-3mm}
\end{figure}

Fig.~\ref{fig:fig_main5_mode1_latency} shows the latency when the total task arrival rate increases from $10$ to $19$ packets per second with $d_c=100$, $120$, and $140$~m, respectively. For comparison purposes, we use a baseline algorithm in which the algorithm observes the first $110$ over $300$ observations nodes and then selects the neighboring nodes from the rest of the arrivals by using the \emph{secretary algorithm} in \cite{lee2017online}. In Fig.~\ref{fig:fig_main5_mode1_latency}, we show that the proposed  framework can reduce the latency  compared to the baseline, for all task arrival rates. For instance, the latency can be reduced by up to $19.25$\% compared to the baseline when $x_i=19$ and $d_c=140$~m. Also, from Fig.~\ref{fig:fig_main5_mode1_latency}, we can see that the latency decreases as the distance to the cloud is reduced. With a shorter distance to the cloud, the cloud transmission service rate becomes higher. Therefore, the cloud is able to process more tasks with a low latency, and the overall latency of the fog network is improved. For example, at $x_i=19$, if $d_c$ decreases from $140$~m to $100$~m, the latency is reduced by $4.29\%$. Moreover, we show that the latency decreases as less tasks arrive at the initial fog node $i$. For instance, when $x_i$ decreases from $19$ to $10$, the latency is reduced by about $25\%$ with $d_c=100$~m. 

\begin{figure}[]
	\centering
	\includegraphics[width=0.45\textwidth]{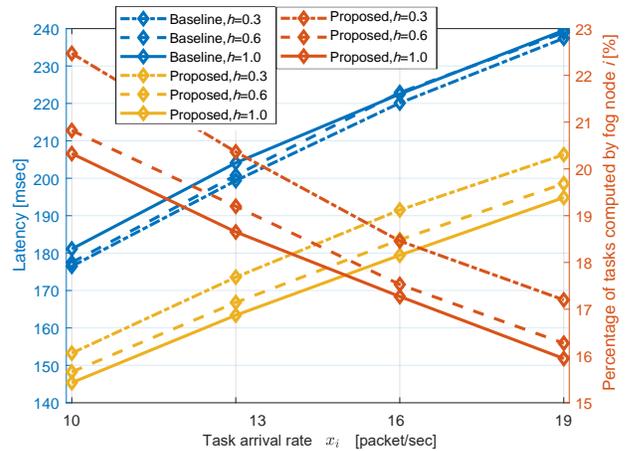}
	\caption{Computing latency and percentage of tasks processed at the initial fog node $i$.}
	\label{fig:fig_revised}\vspace{-3mm}
\end{figure}

Fig.~\ref{fig:fig_revised} shows the latency and the percentage of tasks processed at the initial fog node $i$ when the total task arrival rate increases from $10$ to $19$ packets per second with average fading gain values of $h=0.3$, $0.6$, and $1.0$, respectively. In Fig.~\ref{fig:fig_revised}, we show that the latency decreases as the average fading gain increases, for all task arrival rates. For a higher average fading gain, the transmission service rates of the fog computing nodes become larger. Therefore, the tasks can be efficiently offloaded, with low latency, to neighboring fog nodes and the cloud hence improving the overall latency of the fog network. Also, from Fig.~\ref{fig:fig_revised}, we can see that the percentage of tasks processed at the initial fog node $i$ decreases as the total task arrival rate $x_i$ increases. Moreover, Fig.~\ref{fig:fig_revised} shows that the initial fog node $i$ tends to process more tasks when $h$ is smaller. This is due to the fact that a smaller $h$ increases the wireless transmission latency required to offload tasks to other computing nodes. For example,  at $x_i=10$, if $h$ increases from $0.3$ to $1.0$, the percentage of tasks processed at node $i$ increases by up to about $10\%$. 

\begin{figure}[]
\centering
\includegraphics[width=0.45\textwidth]{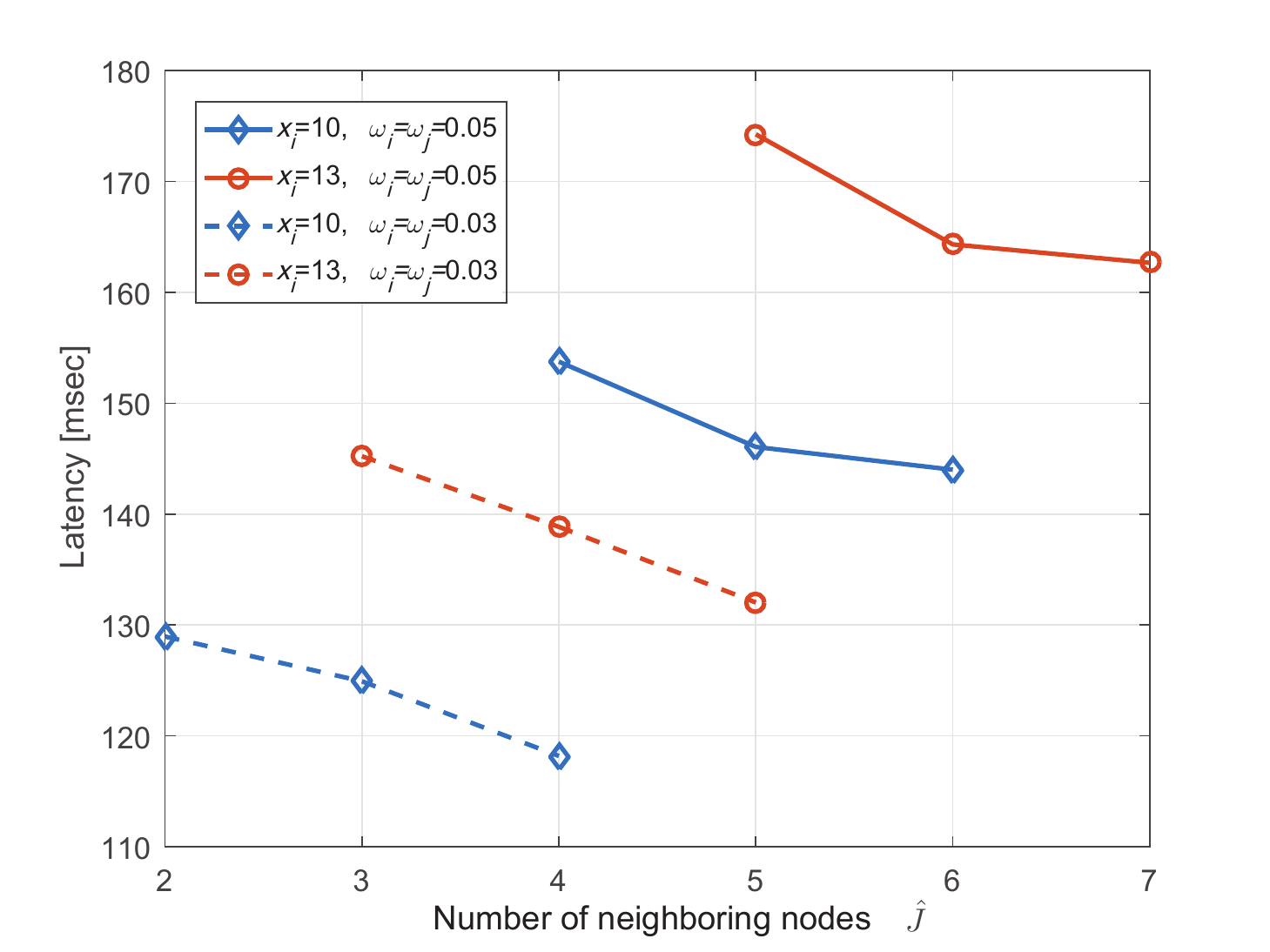}\vspace{-2mm}
\caption{Latency for different number of neighboring nodes.}
\label{fig:fig_main5_mode12_latency}\vspace{-3mm}
\end{figure}

Fig.~\ref{fig:fig_main5_mode12_latency} shows the relationship between the latency and the number of neighboring nodes when the total task arrival rate is given by $x_i=10$ and~$13$ packets per second, respectively, and the processing delays of the fog nodes are given by $\omega_i=\omega_j=50$ and $30$ milliseconds, respectively. In Fig.~\ref{fig:fig_main5_mode12_latency}, a smaller processing delay indicates that the fog nodes have a higher processing speed. From Fig.~\ref{fig:fig_main5_mode12_latency}, we can observe the tradeoff between scenarios having a large number of fog nodes with low processing power and scenarios having a small number of fog nodes with high processing power. If fog nodes with higher processing speed are deployed,  latency is reduced, and the formed network size decreases. This is due to the fact that the fog nodes having a faster processing speed do not need to form a large network. In fact, a larger network size can lead to lower transmission service rates. For instance, if the processing delay of fog nodes decreases from $50$ to $30$ milliseconds, the latency is reduced by up to $18.8\%$ while the number of neighboring nodes decreases from 7 to 5.

\begin{figure}[]
\centering
\includegraphics[width=0.45\textwidth]{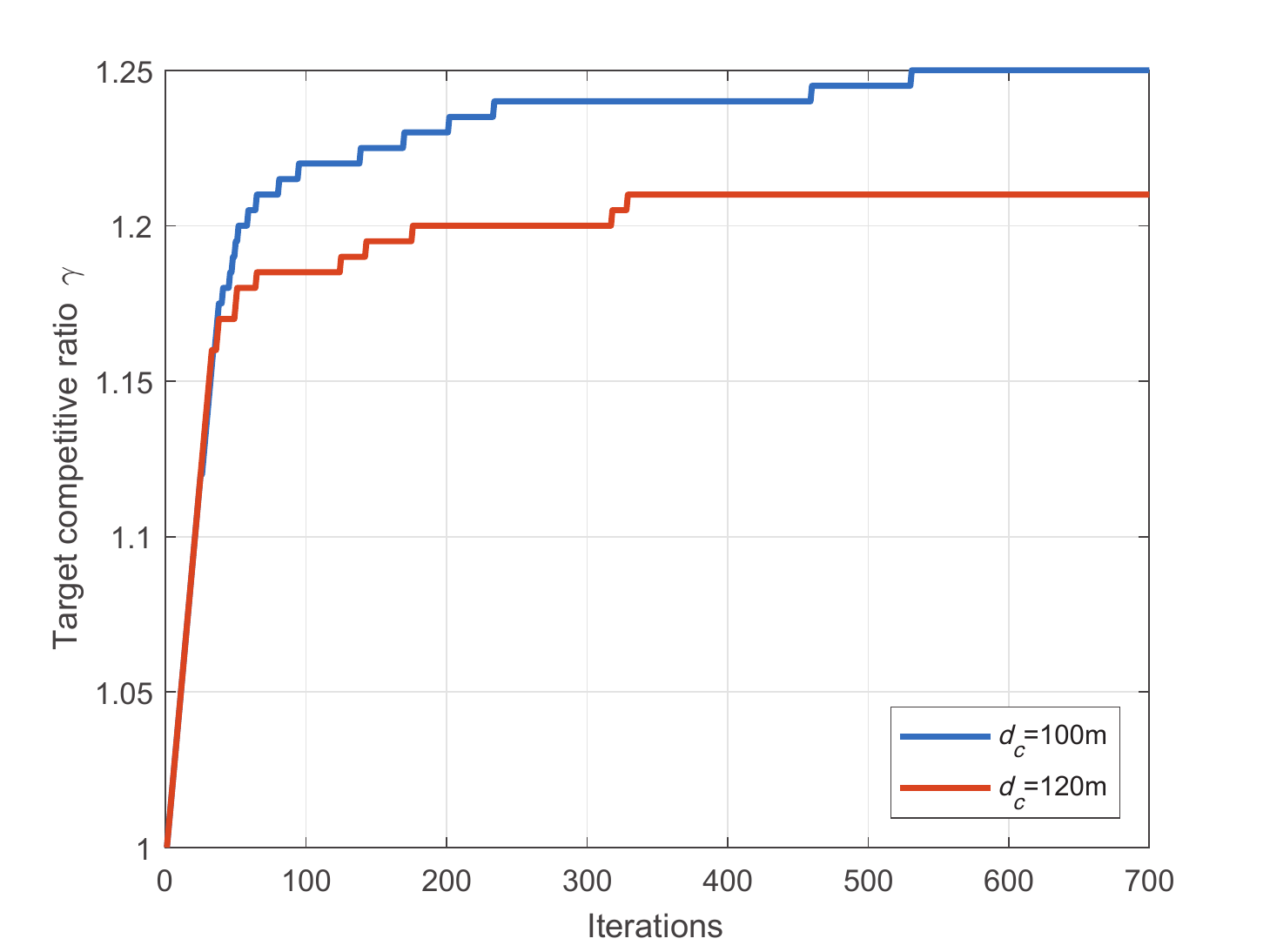}\vspace{-2mm}
\caption{Changes in the target competitive ratio $\gamma$ over 700 updates.}
\label{fig:fig_main5_gamma}\vspace{-3mm}
\end{figure}

Fig.~\ref{fig:fig_main5_gamma} plots the value of $\gamma$ during 700 updates for different distances to the cloud, $d_c=100$~m and $120$~m, respectively. Fig.~\ref{fig:fig_main5_gamma} shows that the value of $\gamma$ approaches a constant value. For instance, $\gamma$  first reaches 1.17 at 38 iterations with $d_c=120$~m. Then, $\gamma$ becomes 1.21 at 329 iterations, and this value is maintained thereafter. From Fig.~\ref{fig:fig_main5_gamma}, we can see that fog node $i$ can find a proper $\gamma$ after a finite number of trials and updates. Also, the results of Fig.~\ref{fig:fig_main5_gamma} show that $\gamma$ becomes larger as the distance to the cloud is closer. This is because  $\hat{u}$ and the threshold value  decrease when $d_c$ is reduced. If the threshold value decreases, it becomes more challenging to select the $\hat{J}$ neighboring nodes within the limited number of observations since the selected neighboring nodes must have a lower latency than the threshold. Therefore, in order to maintain a proper threshold value, $\gamma$ will be larger when $d_c$ decreases. 

\begin{figure}[]
\centering
\includegraphics[width=0.45\textwidth]{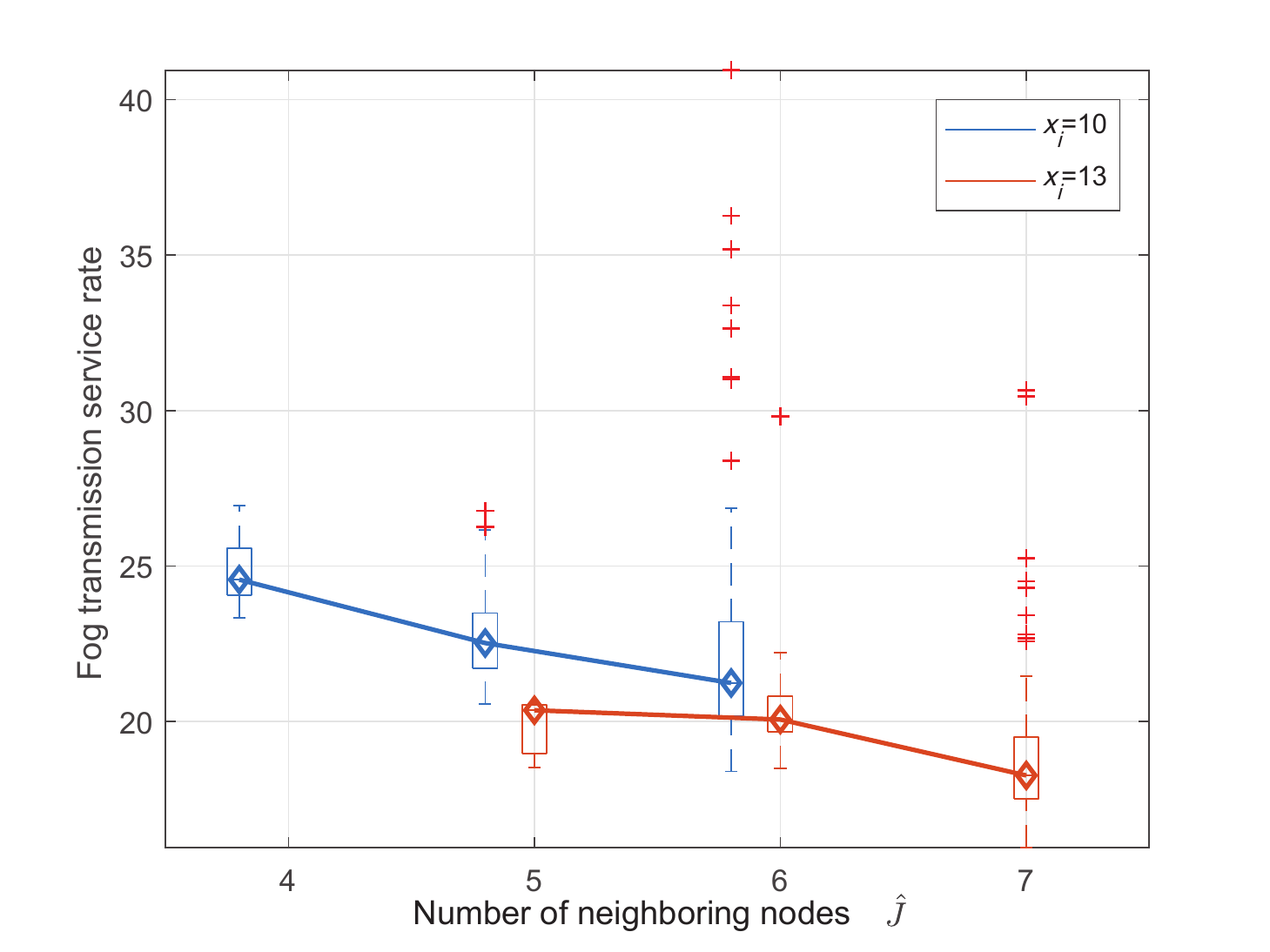}\vspace{-2mm}
\caption{Fog transmission service rate with respect to the number of neighboring nodes. }
\label{fig:fig_main5_mode11_fogtxrate}\vspace{-3mm}
\end{figure}

Fig.~\ref{fig:fig_main5_mode11_fogtxrate} shows the relationship between the fog transmission service rate and the number of neighboring nodes  when $x_i=10$ and~$13$, respectively. Here, we can see that the fog transmission service rate increases as the number of neighboring nodes decreases. This stems from the fact that the bandwidth per node increases as less fog nodes share the total bandwidth. For instance, the fog transmission service rate can increase by $15.6\%$ if $\hat{J}$ goes from 6 to 4 with $x_i=10$. Fig.~\ref{fig:fig_main5_mode11_fogtxrate} also shows that the formed network size becomes larger if $x_i$ increases. This is due to the fact that offloading tasks to a larger size of the network can reduce the tasks per node, and, hence, the maximum latency of the network will decrease. For instance, when $x_i=10$, the range of $\hat{J}$ is between 4 and 6. However, if $x_i= 13$,  $\hat{J}$ falls in the range between 5 and 7. 

\begin{figure}[]
\centering
\includegraphics[width=0.45\textwidth]{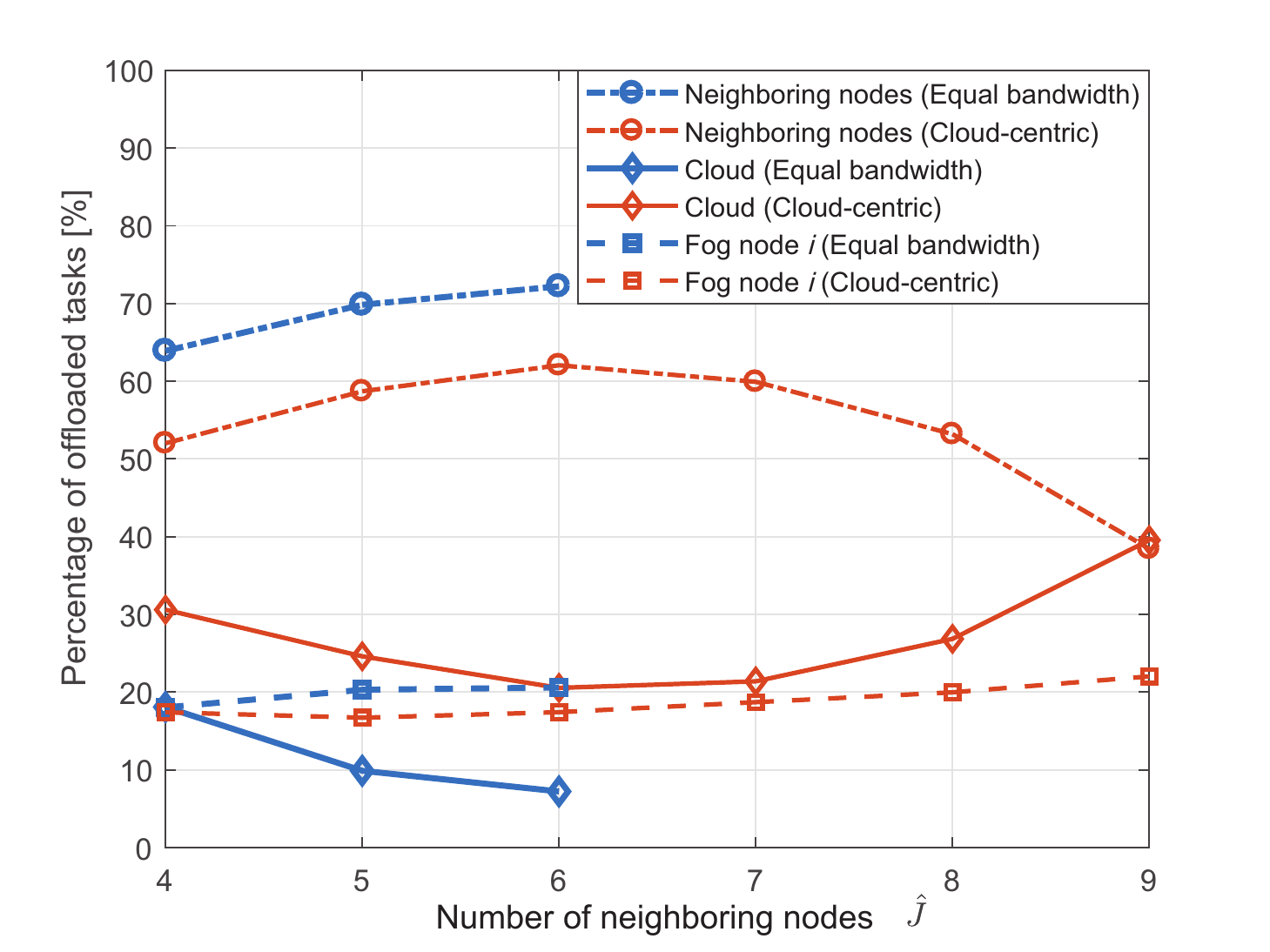}\vspace{-2mm}
\caption{Task distribution with respect to the number of neighboring nodes.}
\label{fig:fig_main5_mode2_task}\vspace{-3mm}
\end{figure}

In Fig.~\ref{fig:fig_main5_mode2_task}, we show the task distribution among neighboring nodes, the cloud, and fog node~$i$ for different numbers of neighboring nodes when two bandwidth allocation approaches are used, respectively. It can be seen that the cloud-centric bandwidth allocation increases the  tasks offloaded to the cloud  when compared to the equal-bandwidth allocation. This is because the cloud transmission service rate increases, so offloading more tasks to the cloud can lower the latency. For instance, if the cloud-centric bandwidth allocation is used and $\hat{J}=4$, the cloud is allocated $22.86$\% more tasks than in the case of equal bandwidth allocation. Also, in Fig.~\ref{fig:fig_main5_mode2_task}, we show that the optimal network size is different, depending on the bandwidth allocation scheme. For instance, the cloud-centric bandwidth allocation yields a larger network size   than the equal bandwidth allocation. When the network size is large, the cloud can maintain a high transmission service rate by using the cloud-centric bandwidth allocation. Therefore, the high cloud transmission service rate enables to offload most tasks to the cloud with a low transmission latency. For example, Fig.~\ref{fig:fig_main5_mode2_task} shows that the number of neighboring nodes is between 4 and 6 if equal bandwidth allocation is used. However, if the cloud-centric bandwidth allocation is used, the number of neighboring nodes varies from 4 to 9. Moreover, Fig.~\ref{fig:fig_main5_mode2_task} shows that the number of  tasks offloaded to the cloud decreases when $\hat{J}$ increases from 4 to 6 for both bandwidth allocation schemes. In this phase, the number of  tasks offloaded to neighboring nodes will increase because offloading more tasks at the fog layer can reduce the latency at the cloud. However, if the number of neighboring nodes increases when using the cloud-centric bandwidth allocation, e.g., there are 7 or more neighboring nodes, the number of tasks offloaded to the neighboring nodes will decrease with the network size. This is due to the fact that the fog transmission service rates are smaller for larger networks which yields higher fog transmission latency. As a result, more tasks will be allocated to the cloud so as to utilize its fast computing resources. 

\vspace{-3mm}
\subsection{Performance Evaluation of Algorithm~\ref{algorithm} for a fixed $\gamma$} 

\begin{figure}[]
\centering
\includegraphics[width=0.45\textwidth]{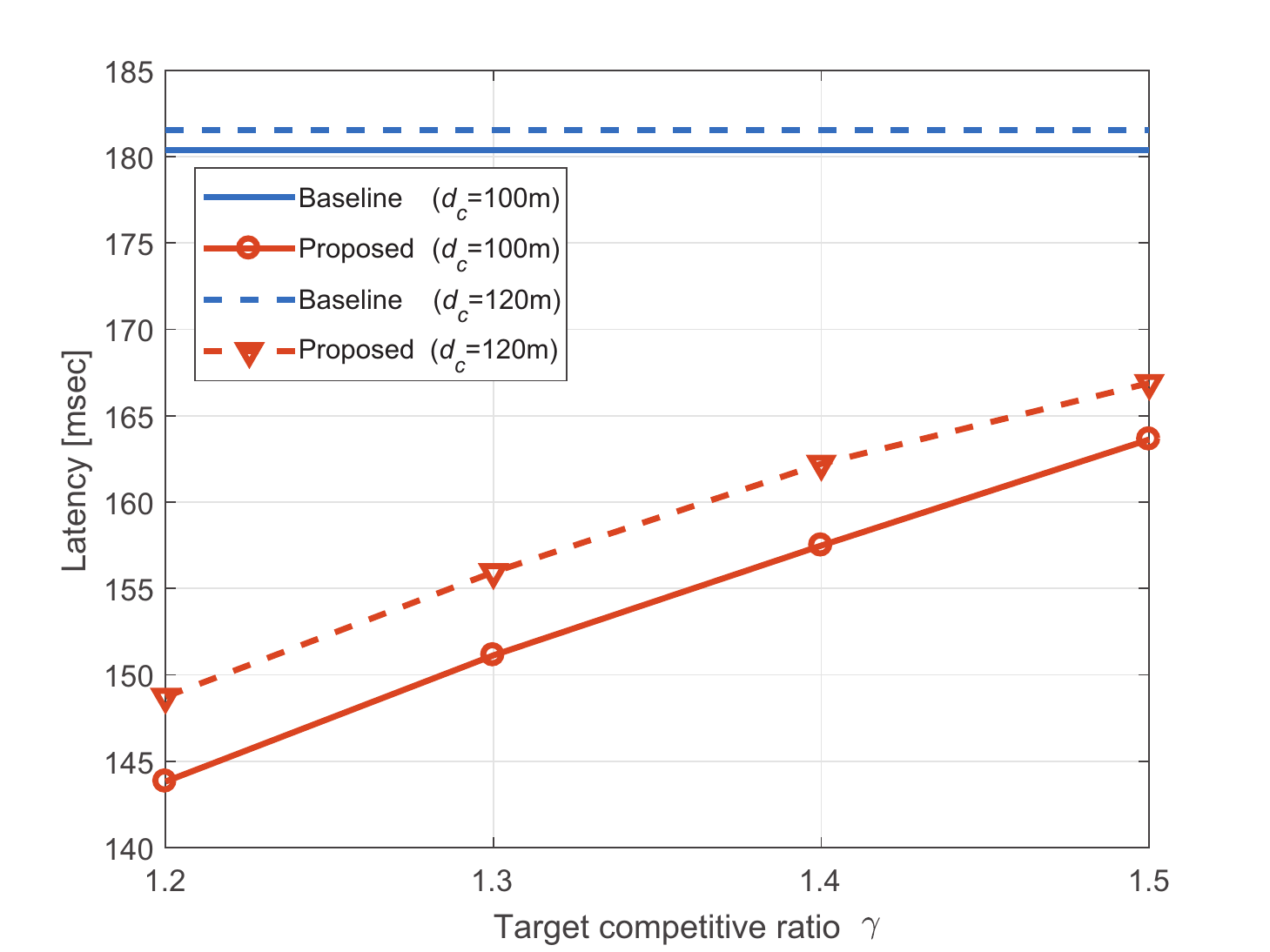}\vspace{-2mm}
\caption{Latency comparison versus the target competitive ratio.}
\label{fig:fig_main2_mode0_latency}\vspace{-3mm}
\end{figure}

In Figs.~\ref{fig:fig_main2_mode0_latency}~and~\ref{fig:fig_main2_observations}, we evaluate the performance of Algorithm~\ref{algorithm} when the proposed framework uses a fixed value of $\gamma$ without constraint \eqref{problem1:c3}. While the target competitive ratio is used in the proposed framework to determine the threshold value and make a decision on node selection, the baseline algorithm has a different mechanism to determine threshold values. Therefore, the latency results of the baseline do not depend on the target competitive ratio. By using a predefined $\gamma$, the update step of $\gamma$ is not needed, which can be useful for scenarios in which the delay of this update can hinder the network latency. Fig.~\ref{fig:fig_main2_mode0_latency} shows the latency for the different preset values of $\gamma$ ranging from 1.2 to 1.5 with $d_c=100$~m and $120$~m, respectively. From Fig.~\ref{fig:fig_main2_mode0_latency}, we can see that the proposed framework achieves lower latencies than the baseline, for all $\gamma$. For instance, the latency of the proposed framework can be reduced by up to $20.3\%$ compared to that of the baseline if $\gamma=1.2$ and $d_c=100$~m. Also, Fig.~\ref{fig:fig_main2_mode0_latency} shows that the latency achieved by the proposed framework becomes smaller when $\gamma$ decreases. This stems from the fact that a low threshold value with small $\gamma$ allows the initial fog node to only select neighboring nodes having a high performance. For example, the latency can be reduced by up to $12.1$\% if $\gamma$ decreases from 1.5 to 1.2 with $d_c=100$~m. 

\begin{figure}[]
\centering
\includegraphics[width=0.45\textwidth]{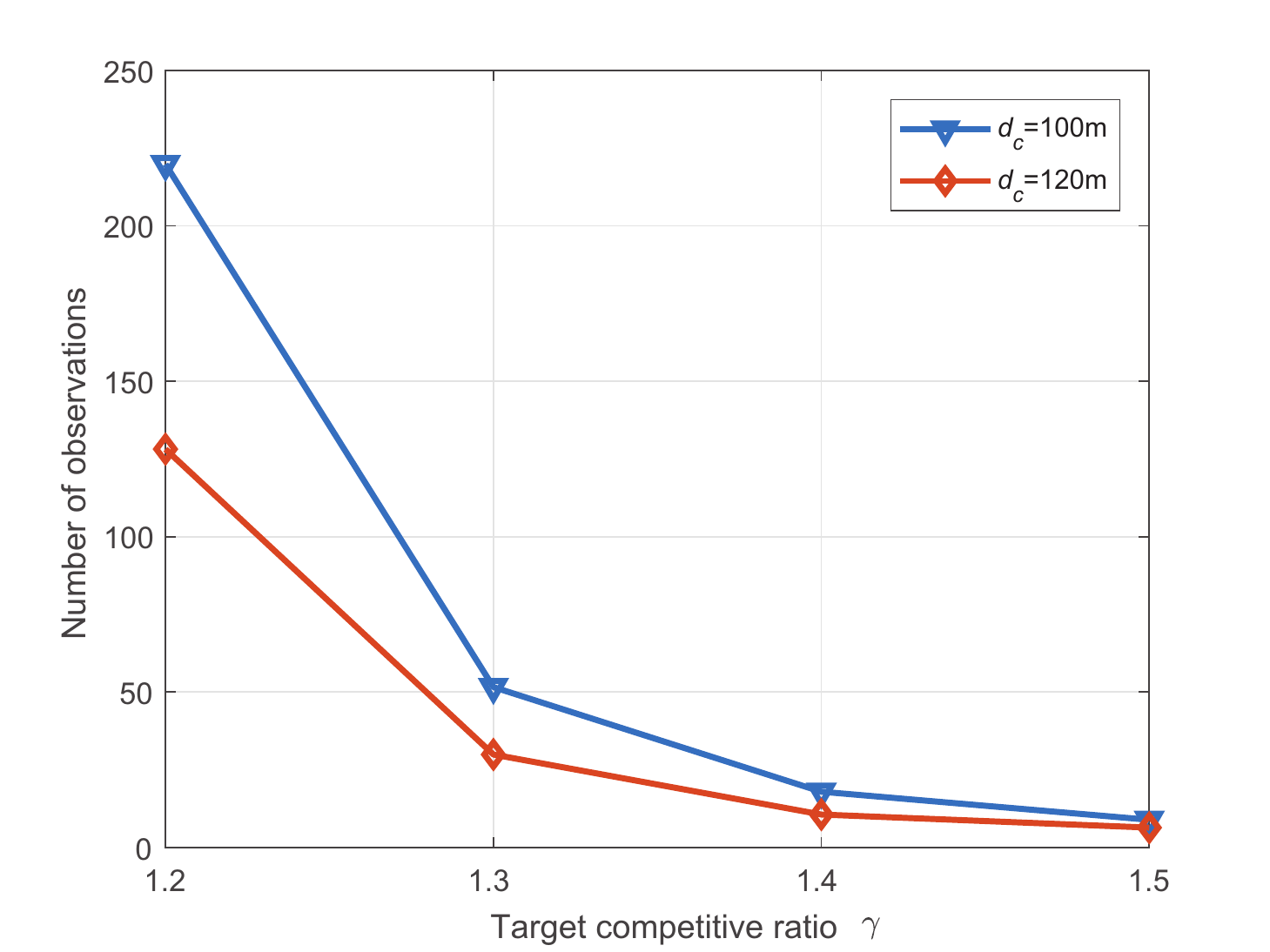}
\caption{The required number of observations for different values of $\gamma$. }
\label{fig:fig_main2_observations}\vspace{-3mm}
\end{figure}

In Fig.~\ref{fig:fig_main2_observations}, we show the number of observations of the neighboring node arrivals until $\hat{J}$ neighboring  nodes are selected for different $\gamma$ with $d_c=100$~m and $120$~m, respectively. In this figure, we can see that a large value of $\gamma$ results in a small number of observations due to the associated increase in the threshold value. For instance, as $\gamma$ increases from $1.2$ to $1.5$, the number of observations can be reduced by about $96\%$ with $d_c=100$~m. Fig.~\ref{fig:fig_main2_observations} shows that a large value of $d_c$ lowers the number of observations since increasing $d_c$ results in a large $\hat{u}$ and threshold value. For example, the number of observations can be reduced by about $42$\% if $d_c$ increases from 100~m to 120~m with $\gamma=1.2$. Moreover, from Figs.~\ref{fig:fig_main2_mode0_latency}~and~\ref{fig:fig_main2_observations}, we can characterize the tradeoff between the latency  and the number of observations. In particular, a small $\gamma$ results in a lower latency, but requires a large number of observation.

\begin{figure}[]
\centering
\includegraphics[width=0.45\textwidth]{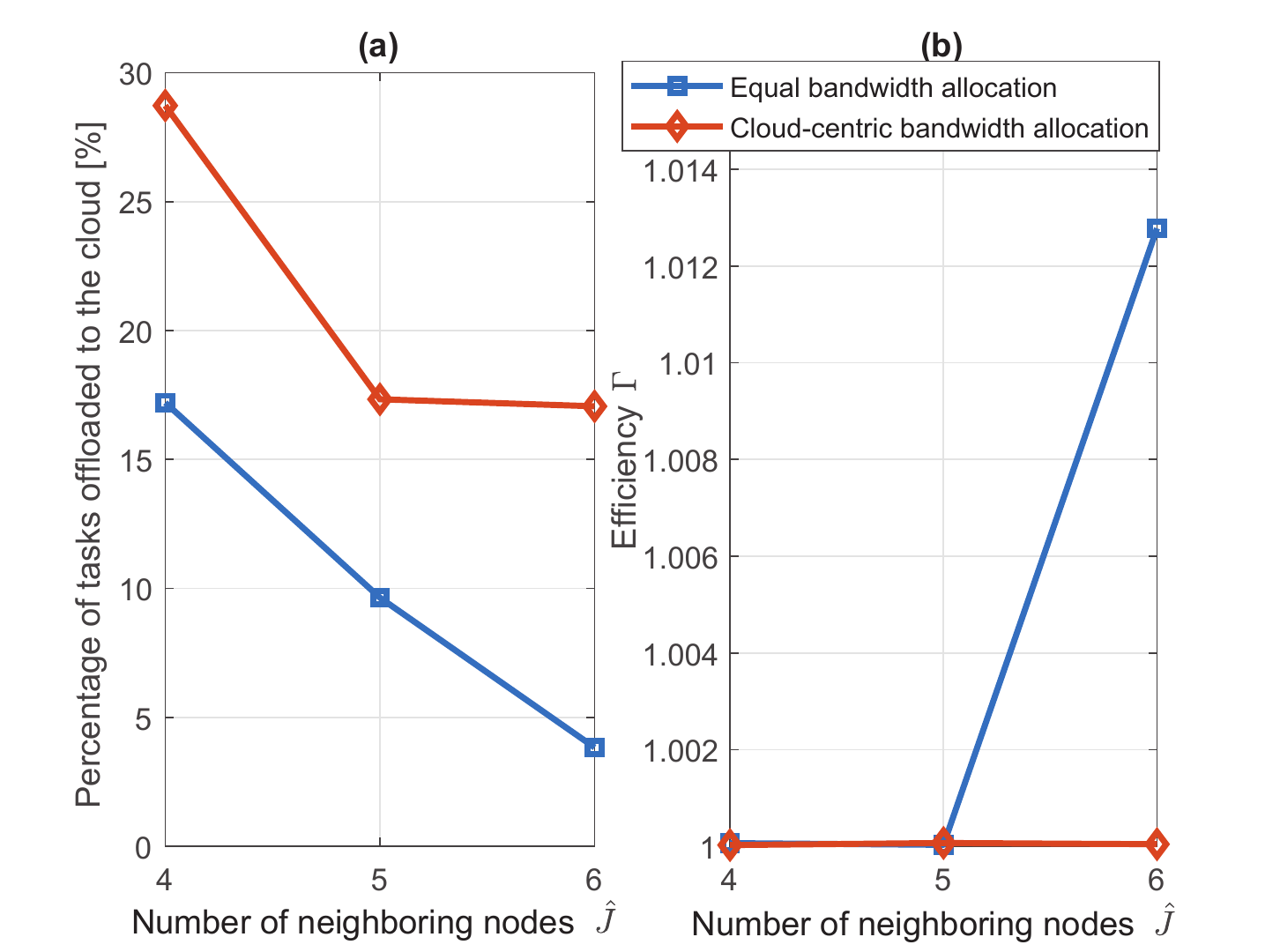}
\caption{Performance comparison of two bandwidth allocation schemes with respect to the number of neighboring nodes. }
\label{fig:fig_main2_mode1_merge}\vspace{-3mm}
\end{figure}

Fig.~\ref{fig:fig_main2_mode1_merge} shows the percentage of tasks offloaded to the cloud and the scheduling efficiency of the task distribution when two bandwidth allocation schemes are used, respectively, with $\gamma=1.2$ and $d_c=100$~m. In Fig.~\ref{fig:fig_main2_mode1_merge}~(a), the tasks offloaded to the the cloud decreases as the number of fog nodes increases since the cloud transmission service rate decreases. Also, Fig.~\ref{fig:fig_main2_mode1_merge}~(b) shows that, when  equal bandwidth allocation is used for a large network size, the scheduling efficiency may not be optimal, i.e., $\Gamma > 1$ due to a large  latency for the transmissions to  the cloud. In this case, though the equal-bandwidth allocation still achieves $\Gamma$ that is close to 1, the cloud-centric bandwidth allocation can be used to enhance efficiency. This is because the cloud-centric bandwidth allocation increases the  cloud transmission service rate by allocating more bandwidth. It can be seen for instance that the equal bandwidth allocation yields $\Gamma=1.013$ in the case of $6$ neighboring nodes, but the efficiency of the cloud-centric bandwidth allocation becomes $\Gamma=1$. 

\vspace{-3mm}
\subsection{Optimal Network Size in an Offline Setting}

\begin{figure}[]
\centering
\includegraphics[width=0.45\textwidth]{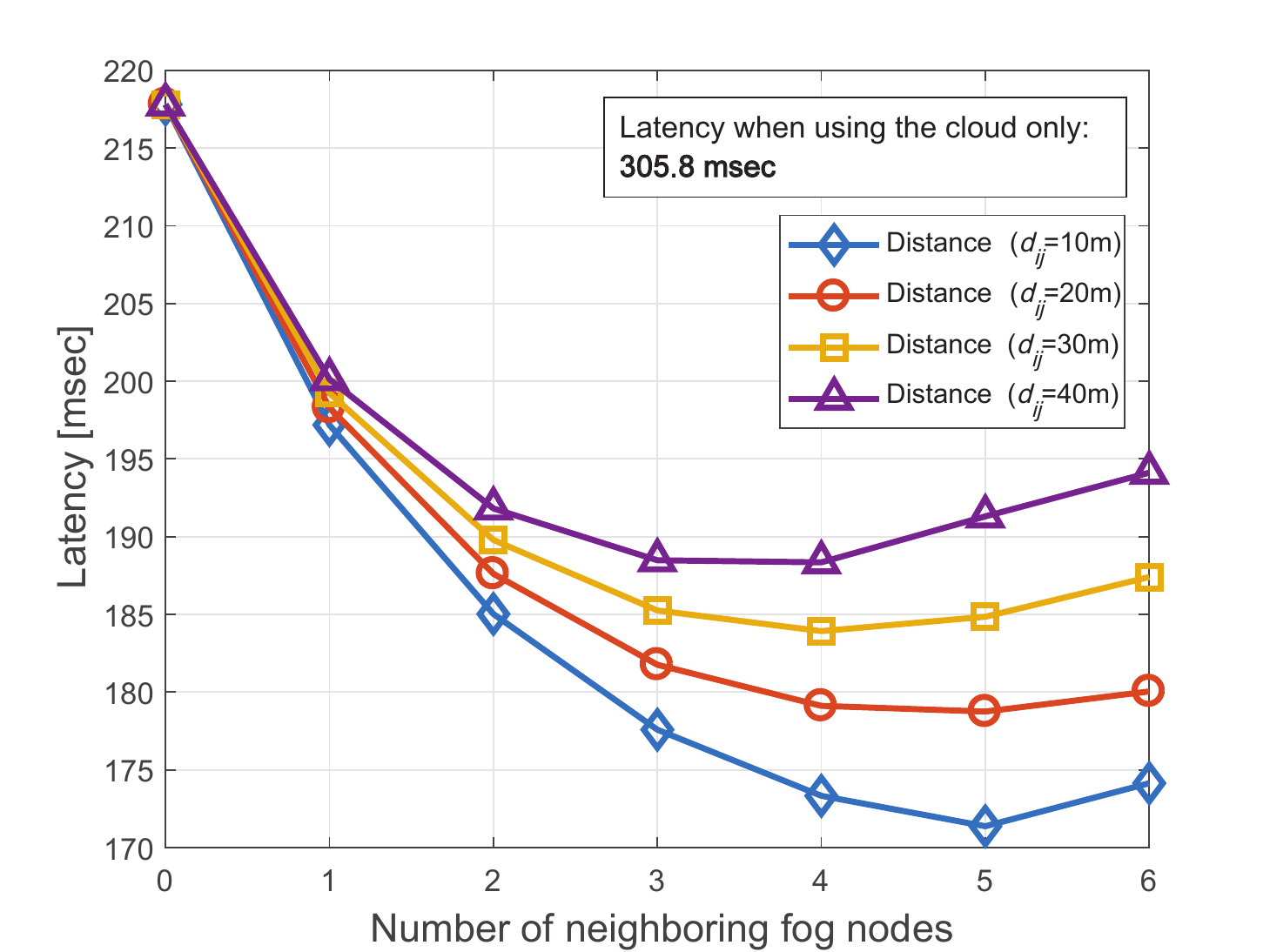}
\caption{Latency for different number of neighboring fog nodes in an offline setting.}
\label{fig:main1_latency}\vspace{-3mm}
\end{figure}

Fig.~\ref{fig:main1_latency} shows  the optimal latency for different network sizes when all neighboring nodes are located at $d_{ij}$ varying from $10$~m to $40$~m. In Fig.~\ref{fig:main1_latency}, it is assumed that complete information on the network is known and that the fog nodes have  identical parameters, i.e., $\mu_i=\mu_j=20$ when $d_c=150$~m. In this offline setting, we study the impact of the network size on the latency  by using an offline optimization solver to find the optimal latency for a given network.  Fig.~\ref{fig:main1_latency}  shows that the optimal latency is directly affected by the number of neighboring nodes. When the network size increases, latency starts to decrease since fewer tasks can be offloaded to each neighboring node. However, if the network size increases, the latency will eventually increase since the bandwidth per node is smaller. For example, the optimal latency decreases when the number of neighboring nodes increases from 1 to 3 with $d_{ij}=40$. However, once the number of neighboring nodes increases beyond 3, the latency starts to increase. Moreover, from Fig.~\ref{fig:main1_latency}, we can see that the optimal network size changes with the distances between fog nodes. For instance, for $d_{ij}=40$~m, the latency can be minimized when there are $3$ neighboring nodes in the fog network. However, if $d_{ij}=10$~m, the latency is minimized when the number of neighboring nodes is $5$. Therefore, if the fog transmission service rate is high (for shorter distances), increasing the number of neighboring nodes to 5 can reduce the latency. On the other hand, if the fog transmission service rate is low (due to poor wireless channel), having a smaller network size with $3$ nodes is required to minimize the latency. Also, we note that the results in Fig.~\ref{fig:main1_latency} show that there exists an optimal network size that can be found by running Phase 1 of Algorithm~\ref{algorithm}. Finally, Fig.~\ref{fig:main1_latency} clearly shows that the latency is reduced by offloading the tasks to both the fog layer and the cloud, instead of relying solely on the cloud. For example, if the tasks are offloaded to the cloud, initial fog node, and 5 neighboring nodes located at $d_{ij}=10$~m, the latency can be reduced by up to $43.9\%$ compared to the case using the cloud only. 

\vspace{-3mm}
\section{Conclusion and Future Work}\label{sec:conclusion}

In this paper, we have proposed a novel framework to jointly optimize the formation of fog networks and the distribution of computational tasks in a hybrid fog-cloud system. We have addressed the problem using an online optimization formulation whose goal is to minimize the maximum latency of the nodes in the fog  network in  presence of uncertainty about fog nodes' arrivals. To solve the problem, we have proposed online optimization algorithms whose target competitive ratio is achieved by suitably selecting the neighboring nodes while effectively offloading the tasks to the neighboring fog nodes and the cloud. The theoretical analysis and simulation results have shown that the proposed framework achieves a low target competitive ratio while successfully minimizing the maximum latency in fog computing. Extensive simulation results are used to showcase the performance benefits of the proposed approach. For future work, a dynamic bandwidth scheme can be designed to further reduce the latency. Also, packet prioritizing can be adopted at the initial fog node to meet different service-level latency requirements. Moreover, the proposed framework can be extended to the scenario in which multiple fog networks are formed by multiple initial fog nodes. Further, the proposed fog network formation algorithm can be extended to account for the instantaneous fading by using advanced techniques such as stochastic optimization. Finally, one important future work is to conduct an experimental analysis pertaining to fog computing over an actual wireless testbed.

\appendices
\vspace{-3mm}
\section{Proof of Proposition~\ref{proposition1}}\label{proof_proposition1}
  
\begin{proof}
For a given $\gamma$, the arriving node $n$ is selected by the initial fog node if $D_n(\hat{\lambda}_{ij}, \mu_{in}) \leq \gamma \hat{u}$. 
The probability of node selection event $E_s$ is $p_s = \textrm{Pr} \left\{  D_n(\hat{\lambda}_{ij}, \mu_{in}) \leq \gamma \hat{u} \right\}$. 
With the same target competitive ratio $\gamma$, $E$ is defined as the event where $E_s$ happens more than $\hat{J}$ times during $N$ trials within an iteration. Since event $E$ is a sufficient condition to form a network for a given $\gamma$, the probability to form a network is at least given by $p = \sum_{k=\hat{J}}^N  {N\choose k} p_s^k (1-p_s)^{N-k}$ where $N$ is the maximum number of observations allowed within an iteration, and all inputs $\sigma_n, \forall n \in[1,N]$ are independent. 

Since $
	p_s = \textrm{Pr}\Big\{  \frac{1}{\mu_{in}-\hat{\lambda}_{ij}} + \frac{1}{\mu_{in}} + \frac{1}{\mu_{n}-\hat{\lambda}_{ij}} + \frac{1}{\mu_{n}} + 2 \omega_n
	\leq $ $\gamma \left( \frac{1}{\bar\mu_{ij}-\hat{\lambda}_{ij}} + \frac{1}{\bar\mu_{ij}} + \frac{1}{\bar\mu_j-\hat{\lambda}_{ij}} + \frac{1}{\bar\mu_j} + 2 \underline \omega_j\right)	
	\Big\}$, a lower bound of $p_s$ can be given by $p'_s = \textrm{Pr}\left\{  E_1 \cap  E_{1'} \cap E_2 \cap  E_{2'} \cap E_3 \right\}$ 
where $E_1$ is the event where $\frac{1}{\mu_{in}-\hat{\lambda}_{ij}} - \gamma \frac{1}{\bar\mu_{ij}-\hat{\lambda}_{ij}} \leq 0$, 
$E_{1'}$ is the event where $\frac{1}{\mu_{in}} - \gamma \frac{1}{\bar\mu_{ij}} \leq 0$, 
$E_2$ is the event where $\frac{1}{\mu_{n}-\hat{\lambda}_{ij}} - \gamma  \frac{1}{\bar\mu_j-\hat{\lambda}_{ij}} \leq 0$, 
$E_{2'}$ is the event where $\frac{1}{\mu_{n}} - \gamma \frac{1}{\bar\mu_j} \leq 0$, and 
$E_3$ is the event where $\omega_n - \gamma \underline \omega_j \leq 0$. Then, $p'_s$ can be rewritten as $\textrm{Pr}\{  E_{1'} | E_{1}\} \textrm{Pr}\{E_{1}\} \textrm{Pr}\{  E_{2'} | E_{2}\} \textrm{Pr}\{E_{2}\} \textrm{Pr}\{  E_3 \}$. Then, due to the relationship $\frac{1}{\mu_{in}} \leq \frac{1}{\mu_{in}-\hat{\lambda}_{ij}} \leq \gamma \frac{1}{\bar\mu_{ij}-\hat{\lambda}_{ij}} \leq \gamma \frac{1}{\bar\mu_j}$, if the condition for $E_1$ is satisfied, i.e., $\frac{1}{\mu_{in}-\hat{\lambda}_{ij}} \leq \gamma \frac{1}{\bar\mu_{ij}-\hat{\lambda}_{ij}}$, then it is clear that the condition for $E_{1'}$ is also satisfied, i.e., $\frac{1}{\mu_{in}} \leq \gamma \frac{1}{\bar\mu_j}$. This, in turn, implies $\textrm{Pr}\{  E_{1'} | E_{1}\}=1$. Similarly, if $E_2$ happens, then it always incurs $E_{2'}$, and, thus, $\textrm{Pr}\{  E_{2'} | E_{2}\}=1$. In consequence, $p'_s$ can be simplified as $p'_s 	= \textrm{Pr}\{E_{1}\} \textrm{Pr}\{E_{2}\} \textrm{Pr}\{  E_3 \}$. Note that $\textrm{Pr}\{ E_{1} \} $ can be expressed by using  $d_{in}$ since $\mu_{in}$ is a function of $d_{in}$ in \eqref{eq:mu}. When $F_{d_{in}}$, $F_{\mu_n}$, and $F_{\omega_n}$, respectively, are the cumulative probability functions with respect to $d_{in}$, $\mu_{n}$, and $\omega_n$, $\textrm{Pr}\{E_{1}\}$, $\textrm{Pr}\{E_{2}\}$, and $\textrm{Pr}\{  E_3 \}$ are $  \textrm{Pr}\{E_{1}\}=  F_{d_{in}}\scriptstyle{\left( \left[ \frac{W_l N_0}{\beta_1 P_{\textrm{tx},i} }\left( 2^{\left(\frac{1}{\gamma} (\bar\mu_{ij}(\underline x_{ij})-\hat{\lambda}_{ij})+ \hat{\lambda}_{ij} \right) \frac{K}{W_l} } - 1 \right) \right]^{-1/\beta_2} \right)}$,
$\textrm{Pr}\{E_{2}\} = 1-  F_{\mu_n}\left( \frac{1}{\gamma}(\bar\mu_j - \hat{\lambda}_{ij})+\hat\lambda_{ij} \right)$,
 and  $\textrm{Pr}\{  E_3 \} =  F_{\omega_n} \left( \gamma \underline\omega_j \right)$. 
 
Finally, it is clear that $ p' \triangleq \sum_{k=\hat{J}}^N  {N\choose k} {p'_s}^k (1-p'_s)^{N-k} \leq p$ due to $p'_s \leq p_s$. 
Hence, $p'$ is a lower bound of the probability that a given target competitive ratio is used to form a network without an update. 
\vspace{-3mm}
\end{proof}

\bibliographystyle{IEEEtran}

\begin{thebibliography}{10}
	\providecommand{\url}[1]{#1}
	\csname url@samestyle\endcsname
	\providecommand{\newblock}{\relax}
	\providecommand{\bibinfo}[2]{#2}
	\providecommand{\BIBentrySTDinterwordspacing}{\spaceskip=0pt\relax}
	\providecommand{\BIBentryALTinterwordstretchfactor}{4}
	\providecommand{\BIBentryALTinterwordspacing}{\spaceskip=\fontdimen2\font plus
		\BIBentryALTinterwordstretchfactor\fontdimen3\font minus
		\fontdimen4\font\relax}
	\providecommand{\BIBforeignlanguage}[2]{{%
			\expandafter\ifx\csname l@#1\endcsname\relax
			\typeout{** WARNING: IEEEtran.bst: No hyphenation pattern has been}%
			\typeout{** loaded for the language `#1'. Using the pattern for}%
			\typeout{** the default language instead.}%
			\else
			\language=\csname l@#1\endcsname
			\fi
			#2}}
	\providecommand{\BIBdecl}{\relax}
	\BIBdecl
	
	
	\bibitem{lee2017online}
	G.~Lee, W.~Saad, and M.~Bennis, ``An online secretary framework for fog network
	formation with minimal latency,'' in \emph{Proc. IEEE Int. Conf. on Commun.
		(ICC)}, Paris, France, May 2017, pp. 1--6.
	
	\bibitem{dawy2017toward}
	Z.~Dawy, W.~Saad, A.~Ghosh, J.~G. Andrews, and E.~Yaacoub, ``Toward massive
	machine type cellular communications,'' \emph{IEEE Wireless Communications},
	vol.~24, no.~1, pp. 120--128, Feb. 2017.
	
	\bibitem{mozaffari2016unmanned}
	M.~Mozaffari, W.~Saad, M.~Bennis, and M.~Debbah, ``Unmanned aerial vehicle with
	underlaid device-to-device communications: Performance and tradeoffs,''
	\emph{IEEE Trans. Wireless Commun.}, vol.~15, no.~6, pp. 3949--3963, Jun.
	2016.
	
	\bibitem{park2016learning}
	T.~Park, N.~Abuzainab, and W.~Saad, ``Learning how to communicate in the
	{{Internet of Things}}: Finite resources and heterogeneity,'' \emph{IEEE
		Access}, vol.~4, pp. 7063--7073, Nov. 2016.
	
	\bibitem{chiang2016fog}
	M.~Chiang and T.~Zhang, ``{Fog} and {IoT}: An overview of research
	opportunities,'' \emph{IEEE Internet of Things Journal}, vol.~3, no.~6, pp.
	854--864, Dec. 2016.
	
	\bibitem{cisco2015fog}
	{Cisco}, ``Fog computing and the {Internet of Things}: Extend the cloud to
	where the things are,'' \emph{Cisco white paper}, 2015.
	
	\bibitem{peng2016fog}
	M.~Peng, S.~Yan, K.~Zhang, and C.~Wang, ``Fog-computing-based radio access
	networks: issues and challenges,'' \emph{IEEE Network}, vol.~30, no.~4, pp.
	46--53, Jul. 2016.
	
	\bibitem{elbamby2017proactive}
	M.~S. ElBamby, M.~Bennis, and W.~Saad, ``Proactive edge computing in
	latency-constrained fog networks,'' in \emph{Proc. European Conf. on Netw.
		and Commun.}, Oulu, Finland, May 2017, pp. 1--6.
	
	\bibitem{lee2017mmtc}
	G.~Lee, W.~Saad, and M.~Bennis, ``Online optimization techniques for effective
	fog computing under uncertainty,'' \emph{MMTC Communications-Frontiers},
	vol.~12, no.~4, pp. 19--23, Jul. 2017.
	
	\bibitem{yannuzzi2014key}
	M.~Yannuzzi, R.~Milito, R.~Serral-Gracià, D.~Montero, and M.~Nemirovsky, ``Key
	ingredients in an iot recipe: Fog computing, cloud computing, and more fog
	computing,'' in \emph{Proc. IEEE 19th International Workshop on Computer
		Aided Modeling and Design of Communication Links and Networks (CAMAD)},
	Athens, Greece, Dec 2014, pp. 325--329.
	
	\bibitem{bonomi2012fog}
	F.~Bonomi, R.~Milito, J.~Zhu, and S.~Addepalli, ``Fog computing and its role in
	the internet of things,'' in \emph{Proc. 1st MCC workshop on Mobile cloud
		computing}.\hskip 1em plus 0.5em minus 0.4em\relax Helsinki, Finland: ACM,
	Aug. 2012, pp. 13--16.
	
	\bibitem{vallati2015exploiting}
	C.~Vallati, A.~Virdis, E.~Mingozzi, and G.~Stea, ``Exploiting {LTE D2D}
	communications in {M2M} fog platforms: Deployment and practical issues,'' in
	\emph{Proc. IEEE 2nd World Forum on IoT}, Milan, Italy, Dec. 2015, pp.
	585--590.
	
	\bibitem{khelil2014suitability}
	A.~Khelil and D.~Soldani, ``On the suitability of device-to-device
	communications for road traffic safety,'' in \emph{Proc. IEEE World Forum on
		Internet of Things (WF-IoT)}, Seoul, Korea, Mar. 2014, pp. 224--229.
	
	\bibitem{luan2011towards}
	T.~H. Luan, L.~X. Cai, J.~Chen, X.~Shen, and F.~Bai, ``Vtube: Towards the media
	rich city life with autonomous vehicular content distribution,'' in
	\emph{Proc. IEEE Conf. Sensor, Mesh and Ad Hoc Commun. and Netw.}, Salt Lake
	City, UT, USA, Jun. 2011, pp. 359--367.
	
	\bibitem{nishio2013service}
	T.~Nishio, R.~Shinkuma, T.~Takahashi, and N.~B. Mandayam, ``Service-oriented
	heterogeneous resource sharing for optimizing service latency in mobile
	cloud,'' in \emph{Proc. 1st Int. Wksh. on Mobile Cloud Comput. Netw.},
	Bangalore, India, Jul. 2013, pp. 19--26.
	
	\bibitem{sharma2017saca}
	V.~Sharma, J.~D. Lim, J.~N. Kim, and I.~You, ``{SACA}: Self-aware communication
	architecture for {IoT} using mobile fog servers,'' \emph{Mobile Information
		Systems}, vol. 2017, pp. 1--17, Apr. 2017.
	
	\bibitem{zhao2017tasks}
	T.~Zhao, S.~Zhou, X.~Guo, and Z.~Niu, ``Tasks scheduling and resource
	allocation in heterogeneous cloud for delay-bounded mobile edge computing,''
	in \emph{Proc. IEEE Int. Conf. on Commun. (ICC)}, Paris, France, May 2017,
	pp. 1--7.
	
	\bibitem{kaewpuang2013framework}
	R.~Kaewpuang, D.~Niyato, P.~Wang, and E.~Hossain, ``A framework for cooperative
	resource management in mobile cloud computing,'' \emph{IEEE J. Sel. Areas in
		Commun.}, vol.~31, no.~12, pp. 2685--2700, Dec. 2013.
	
	\bibitem{khaledi2016profitable}
	M.~Khaledi, M.~Khaledi, and S.~K. Kasera, ``Profitable task allocation in
	mobile cloud computing,'' in \emph{Proc. 12th Int. Symp. on QoS and Security
		for Wireless and Mobile Networks}, Malta, Nov. 2016.
	
	\bibitem{ketyko2016multi}
	I.~Ketyk\'{o}, L.~Kecsk\'{e}s, C.~Nemes, and L.~Farkas, ``Multi-user
	computation offloading as multiple knapsack problem for {5G} mobile edge
	computing,'' in \emph{Proc. European Conf. on Netw. and Commun.}, Athens,
	Greece, Jun. 2016, pp. 225--229.
	
	\bibitem{souza2016handling}
	V.~Souza, W.~\!Ramirez, X.~Masip-Bruin, E.~Marín-Tordera, G.~\!Ren, and
	G.~\!Tashakor, ``Handling service allocation in combined fog-cloud
	scenarios,'' in \emph{Proc. IEEE Int. Conf. on Commun. (ICC)}, Kuala Lumpur,
	Malaysia, May 2016, pp. 1--5.
	
	\bibitem{park2016joint}
	S.~H. Park, O.~Simeone, and S.~S. Shitz, ``Joint optimization of cloud and edge
	processing for fog radio access networks,'' \emph{IEEE Trans. Wireless
		Commun.}, vol.~15, no.~11, pp. 7621--7632, Nov. 2016.
	
	\bibitem{yu2016joint}
	Y.~Yu, J.~Zhang, and K.~B. Letaief, ``Joint subcarrier and {CPU} time
	allocation for mobile edge computing,'' in \emph{Proc. IEEE Global Commun.
		Conf. (GLOBECOM)}, Washington DC, USA, Dec. 2016.
	
	\bibitem{deng2015towards}
	R.~Deng, R.~Lu, C.~Lai, and T.~H. Luan, ``Towards power consumption-delay
	tradeoff by workload allocation in cloud-fog computing,'' in \emph{Proc. IEEE
		Int. Conf. on Commun. (ICC)}, London, UK, Jun. 2015, pp. 3909--3914.
	
	\bibitem{mao2016power}
	Y.~Mao, J.~Zhang, S.~Song, and K.~B. Letaief, ``Power-delay tradeoff in
	multi-user mobile-edge computing systems,'' in \emph{Proc. IEEE Global
		Commun. Conf. (GLOBECOM)}, Washington DC, USA, Dec. 2016.
	
	\bibitem{lee2017computational}
	G.~Lee, W.~Saad, and M.~Bennis, ``Online optimization for low-latency
	computational caching in fog networks,'' in \emph{Proc. Fog World Congress
		2017}, Santa Clara, CA, USA, Jun. 2017.
	
	\bibitem{wang2017enorm}
	N.~Wang, B.~Varghese, M.~Matthaiou, and D.~S. Nikolopoulos, ``Enorm: A
	framework for edge node resource management,'' \emph{IEEE Transactions on
		Services Computing}, pp. 1--1, Sep. 2017\color{black}.
	
	\bibitem{bertsekas1992data}
	D.~P. Bertsekas, R.~G. Gallager, and P.~Humblet, \emph{Data networks}.\hskip
	1em plus 0.5em minus 0.4em\relax Prentice-Hall International New Jersey,
	1992, vol.~2.
	
	\bibitem{varghese2016challenges}
	B.~Varghese, N.~Wang, S.~Barbhuiya, P.~Kilpatrick, and D.~S. Nikolopoulos,
	``Challenges and opportunities in edge computing,'' in \emph{Proc. Int. Conf.
		on Smart Cloud}, New York, NY, USA, Nov. 2016, pp. 20--26\color{black}.
	
	\bibitem{doppler2011advances}
	K.~Doppler, C.~B. Ribeiro, and J.~Kneckt, ``Advances in {D2D} communications:
	Energy efficient service and device discovery radio,'' in \emph{Proc.
		Wireless Veh. Technol., Info. Theory, Aerosp. Electr. Syst. Technol.},
	Chennai, India, Feb. 2011, pp. 1--6.
	
	\bibitem{babaioff2007knapsack}
	M.~Babaioff, N.~Immorlica, D.~Kempe, and R.~Kleinberg, ``A knapsack secretary
	problem with applications,'' in \emph{Proc. Int. Workshop on Approx. and
		Random., and Combinatorial Optimization}, Princeton, NJ, USA, Aug. 2007, pp.
	16--28.
	
	\bibitem{saad2009distributed}
	W.~Saad, Z.~Han, M.~Debbah, and A.~Hjorungnes, ``A distributed coalition
	formation framework for fair user cooperation in wireless networks,''
	\emph{IEEE Trans. Wireless Commun.}, vol.~8, no.~9, pp. 4580--4593, Sep.
	2009.
	
	\bibitem{borodin2005online}
	A.~Borodin and R.~El-Yaniv, \emph{Online computation and competitive
		analysis}.\hskip 1em plus 0.5em minus 0.4em\relax Cambridge University Press,
	2005.
	
	\bibitem{nocedal2006numerical}
	J.~Nocedal and S.~J. Wright, \emph{Numerical Optimization}, 2nd~ed.\hskip 1em
	plus 0.5em minus 0.4em\relax New York, NY, USA: Springer, 2006.
	
	\bibitem{mirshekarian2016correlation}
	S.~Mirshekarian and D.~N. Sormaz, ``Correlation of job-shop scheduling problem
	features with scheduling efficiency,'' \emph{Expert Systems with
		Applications}, vol.~62, pp. 131--147, 2016.
	
\end{thebibliography}

\end{document}